\newtheorem{theorem}{Theorem}
\newtheorem{lemma}{Lemma}
\newtheorem{corollary}{Corollary}
\newtheorem{definition}{Definition}
\newtheorem{observation}{Observation}
 \newtheorem{case}{Case}
 \newtheorem{subcase}{Subcase}
\journal{}
\begin{document}

\begin{frontmatter}



\title{Parking Problem by Oblivious Mobile Robots in Infinite Grids}


\author[inst1]{Abhinav Chakraborty\corref{mycorrespondingauthor}}
\cortext[mycorrespondingauthor]{Corresponding author}
\ead{abhinav.chakraborty06@gmail.com}

\author[inst2]{Krishnendu Mukhopadhyaya}
\ead{krishnendu.mukhopadhyaya@gmail.com }

\affiliation[inst1]{organization={Department of Computer Science and Engineering, Institute of Technical Education and Research, Siksha O Anusandhan University, Bhubaneswar, Odisha, India}}

\affiliation[inst2]{organization={Advanced Computing and Microelectronics Unit, Indian Statistical Institute},
            addressline={203, B. T. Road}, 
            city={Kolkata},
            postcode={700108}, 
            state={West Bengal},
            country={India}}

\begin{abstract}
In this paper, the parking problem of a swarm of mobile robots has been studied. The robots are deployed at the nodes of an infinite grid, which has a subset of prefixed nodes marked as \textit{parking nodes}. Each parking node $p_i$ has a capacity of $k_i$ which is given as input and equals the maximum number of robots a parking node can accommodate. As a solution to the parking problem, robots need to partition themselves into groups so that each parking node contains a number of robots that are equal to the capacity of the node in the final configuration. It is assumed that the number of robots in the initial configuration represents the sum of the capacities of the parking nodes. The robots are assumed to be autonomous, anonymous, homogeneous, identical and oblivious. They operate under an asynchronous scheduler. They neither have any agreement on the coordinate axes nor do they agree on a common chirality. All the initial configurations for which the problem is unsolvable have been identified. A deterministic distributed algorithm has been proposed for the remaining configurations, ensuring the solvability of the problem.
\end{abstract}



\begin{keyword}
Distributed Computing \sep Mobile Robots \sep Look-Compute-Move cycle \sep{Asynchronous}, \sep{Infinite Grids}, \sep{Parking Nodes}
\end{keyword}

\end{frontmatter}



\section{Introduction}

Robot swarms are groups of generic mobile robots that can collaboratively execute complex tasks. Such systems of mobile robots are assumed to be simple and inexpensive and offer several advantages over traditional single-robot systems, such as scalability, robustness and versatility. A series of research on the algorithmic aspects of distributed coordination of robot swarms has been reported in the field of distributed computing (see \cite{flocchini2019distributed} for a comprehensive survey). In the traditional framework of swarm robotics, the robots are assumed to be anonymous (no unique identifiers), autonomous (there is no centralized control), identical (no unique identifiers), homogeneous (each robot executes the same deterministic distributed algorithm) and oblivious (no memory of past information) computational entities. The robots are represented as points in the Euclidean plane. They do not have access to any global coordinate system. However, each robot has its own local coordinate system, with the origin representing the current position of the robot. The robots do not have an explicit means of communication, i.e., they are assumed to be silent. They are disoriented, i.e., they neither agree on a common coordinate axes nor do they have any agreement on chirality. Each robot is equipped with visibility sensors, by which they can perceive the deployment region.  

\noindent In this paper, the deployment region of the robots is assumed to be an infinite square grid, which represents a natural discretization of the plane. The robots are deployed at the nodes of the input grid graph. The graph also consists of some prefixed grid nodes, designated as parking nodes. When a robot becomes active, it operates according to the \textit{Look-Compute-Move} cycle. A robot takes a snapshot of the entire graph, including the positions of the other robots and parking nodes in the \textit{Look} phase. Based on the snapshot, it computes a destination node in the \textit{Compute} phase according to a deterministic algorithm, where the destination node might be its current position as well. Finally, it moves towards the destination in the \textit{Move} phase. There are three types of schedulers considered in the literature which describe the timing of operations of the robots. In the fully synchronous model ($\mathcal{FSYNC}$), time is divided into global rounds, and all the robots are activated simultaneously. The robots take their snapshots simultaneously and then execute their moves concurrently. In the semi-synchronous model ($\mathcal{SSYNC}$), a subset of robots is activated at the same time. In this paper, we have considered the most general model, which is the asynchronous model ($\mathcal{ASYNC}$). In this setting, there is no common notion of time, and all the robots are activated independently. Each of the Look, Compute and Move phases has a finite but unpredictable duration. In the initial configuration, it has been assumed that the robots are placed at the distinct nodes of the grid graph. During the look phase, the robots can perceive the parking nodes using their visibility sensors. Each parking node has a capacity, which is subjected to a constraint that it can accommodate a maximum number of robots equal to its capacity. The capacity of a parking node is given as an input to each robot. For simplicity, we have assumed that the number of robots in the initial configuration is equal to the sum of the capacities of the parking nodes. In this paper, we have assumed that the robots have \textit{global-strong multiplicity detection capability}. This means the robots are able to determine the exact number of robots that make up the multiplicity in each node. It has been proved later that the parking problem is unsolvable if the robots do not have such capabilities. 

In the initial configuration, the robots are placed at the distinct nodes of the grid graph. During the look phase, the robots can perceive the parking nodes using their visibility sensors. The parking nodes occupy distinct nodes of the grid. We assume that a robot may be deployed on a parking node in the initial configuration. In the initial configuration, each parking node has a capacity, which is subjected to a constraint that, at any moment of time, it can accommodate a maximum number of robots equal to its capacity. The capacity of a parking node is given as an input to each robot. For simplicity, assume that each of the parking nodes has uniform capacities, i.e., the capacities of the parking nodes are equal. Further, assume that the number of robots in the initial configuration equals $pk$, where $p$ is the number of parking nodes.

\subsection{Motivation}
 The fundamental motivation behind studying the parking problem is twofold.
 The fundamental motivation behind studying the parking problem is twofold. Firstly, the parking problem can be viewed as a special case of the partitioning problem \cite{EFRIMA20091355}, which requires the robots to divide themselves into $m$ groups, each consisting of $k$ robots while converging into a small area. Unlike the partitioning problem, the parking problem requires that each parking node must contain robots exactly equal to its given capacity in the final configuration. However, the capacities of the parking nodes may be different. Moreover, if the capacities of each of the parking nodes are assumed to be $k$, i.e., they are equal in the initial configuration; the problem is reduced to the $k$-epf problem \cite{a14020062}, which is a generalized version of the embedded pattern formation problem, where each fixed point contains exactly $k$ robots in the final configuration. 
 \noindent Secondly, in the traditional models, the robots are assumed to be points that can move freely on the plane. The robots are assumed to move with high accuracy and by infinitesimal distance in the continuous domain. Even if the area of robot deployment is small, a dimensionless robot can move without causing any collision. In practice, it may not always be possible to perform such infinitesimal movements with infinite precision. However, in our paper, the robots are deployed at the nodes of an infinite grid. The movements of the robots are restricted along the grid lines, and a robot can move toward one of its neighbors at any instant of time. The restrictions imposed by the grid model on the movements of the robots make it challenging to design collision-less algorithms, as opposed to the movement of the robots in a continuous environment. In addition to the theoretical benefits, the parking nodes can also be seen as base stations or charging stations with some allowable capacities.

 \subsection{Related Works}

Most of the theoretical studies on swarm robotics have been concentrated on \textit{arbitrary formation problem} and \textit{gathering} under different settings. The Arbitrary Pattern Formation or $\mathcal APF$ is a fundamental coordination problem in Swarm Robotics, where the robots are required to form any specific but arbitrary geometric pattern given as input. The study of $\mathcal APF$ was initiated in \cite{DBLP:journals/jfr/SugiharaS96}. The authors characterized the class of formable patterns by using the notion of symmetricity, which is essentially the order of the cyclic group that acts on the initial configuration. The $\mathcal APF$ was first studied in the $\mathcal{ASYNC}$ by Flocchini et al. \cite{DBLP:journals/tcs/FlocchiniPSW08}, where the robots are assumed to be oblivious. While all the previous studies considered the problem with unlimited visibility, Yamauchi et al. \cite{10.1007/978-3-319-03578-9_17} studied the problem where the robots have limited visibility. Cicerone et al. \cite{DBLP:journals/dc/CiceroneSN19} studied the $\mathcal APF$ problem without assuming common chirality among the robots. Bose et al. \cite{BOSE2020213} were the first to study the problem in a grid-based terrain, where the movements of the robots are restricted only along grid lines and only by a unit distance at each step. Bose et al. \cite{BOSE2021138} investigated the problem on the Euclidean plane in a setting where the robots are assumed to be opaque, i.e., the view of a robot can be obstructed by the presence of the other robots. The $\mathcal APF$ problem was studied by Cicerone et al. \cite{10.1145/3427796.3427833} in regular tessellation graphs, where the initial configuration is assumed to be asymmetric. The gathering problem may be viewed as a point formation problem. The problem requires the robot to gather at one of the points not known beforehand and within a finite amount of time. It has been extensively studied in the literature \cite{10.1007/3-540-45061-0_90,DBLP:conf/soda/AgmonP04,DBLP:journals/jda/BhagatCM16}. 

\noindent D'Angelo et al. \cite{DBLP:journals/tcs/DAngeloSKN16}, studied the gathering problem on finite grids. Stefano et al. studied the optimal gathering problem in infinite grids \cite{DBLP:journals/iandc/StefanoN17}. In this paper, they proposed an optimal deterministic algorithm that minimizes the total distance traveled by all the robots. The concept of fixed points was first introduced by Fujinaga et al. \cite{DBLP:conf/opodis/FujinagaOKY10} on the Euclidean plane. In this paper, the \textit{landmark} covering problem was studied. The problem requires that each robot must attain a configuration where all the robots must occupy a single fixed point or landmark. They propose an algorithm based on the assumption that the robots agree on a common chirality. The proposed algorithm minimizes the total distance traveled by all the robots. In \cite{DBLP:journals/dc/CiceroneSN19a}, Cicerone et al. studied the \textit{embedded pattern formation} problem without assuming any common chirality among the robots. The problem necessitates a distributed algorithm in which each robot must occupy a unique fixed point within a finite amount of time. The \textit{$k$-circle formation problem} \cite{a14020062,DBLP:journals/tcs/DasCBM22} has been studied in the setting where the robots agree on the directions and orientations of the $Y$- axis and on the disoriented setting. Given a positive integer $k$, the \textit{$k$-circle formation problem} asks a swarm of mobile robots to form disjoint circles. Each of these circles must be centered at one of the pre-fixed points on the plane. Each circle must contain a total of $k$ robots at distinct locations on the circumference of the circles. Bhagat et al. \cite{a14020062} also studied the \textit{$k$- epf problem} in the continuous domain, which is a generalized version of the embedded pattern formation problem. This problem necessitates the arrival and retention of exactly $k$ robots at each fixed point. Cicerone et al. \cite{DBLP:journals/dc/CiceroneSN18} studied a variant of the gathering problem, where each robot must gather at one of the prefixed meeting points. The problem was defined as \textit{gathering on meeting points} problem. The authors proposed a deterministic algorithm that minimizes the total distance traveled by all the robots and minimizes the maximum distance traveled by a single robot. \textit{Gathering over meeting nodes} problem was studied by Bhagat et al. \cite{DBLP:journals/fuin/BhagatCDM22,DBLP:journals/ijfcs/BhagatCDM23}. In this problem, the robots are deployed on the nodes of an infinite square grid, which has a subset of nodes marked as meeting nodes. Each robot must gather at one of the prefixed meeting nodes within a finite amount of time. Other specific problems that have been considered in the infinite grid domain are the \textit{Mutual visibility problem} \cite{DBLP:journals/ijnc/SharmaVT21,DBLP:conf/ipps/HectorVST20,ADHIKARY2022}, \textit{exploration problem} \cite{10.1007/978-3-030-67087-0_9}, etc. The Mutual Visibility problem requires a distributed algorithm that allows the robots to reposition themselves to form a configuration in which they occupy distinct locations, and no three of them are collinear. The objective of the exploration problem is to visit every node of the graph.

\subsection{Our Contribution}
This paper considers the parking problem over an infinite grid. The robots are deployed at the nodes of an infinite grid, which also consists of some prefixed parking nodes. Each parking node $p_i$ has a capacity $k_i$, which is the maximum number of robots it can accommodate at any moment of time. We assume that the number of robots $n$ is equal to $\sum\limits_{i=1}^m k_i$, where $m$ is the total number of parking nodes. The robots are assumed to be anonymous, autonomous, homogeneous and oblivious. The robots are activated under a fair asynchronous scheduler. Under this setup, we have characterized all the initial configurations and the values of $k_i$ for which the problem is unsolvable. For the remaining configurations, a deterministic algorithm was proposed. 
\subsection{Organisation}
In the next section the robot model and the definitions relevant to this paper has been described. Some basic definitions and notations are presented in this section. In section \ref{s3}, the configurations for which the parking problem is unsolvable have been identified. In this section, a partitioning of the initial configurations has been specified. In section \ref{s4}, a formal description of the algorithm has been mentioned. Section \ref{chap6:sec6} discusses the correctness proofs of the algorithm. The paper is concluded with some potential future directions in Section \ref{chap6:sec7}. 
\section{Models and Definitions}
\subsection{Models}

The robots are assumed to be dimensionless, anonymous, autonomous, identical, homogeneous and oblivious. The robots are assumed to be disoriented, i.e., they neither have any agreement on the coordinate axes nor have any agreement on a common chirality. They do not have an explicit means of communication, i.e., they are assumed to be silent. Let $P=(\mathbb{Z}$, $E')$ denote the infinite path graph with the vertex set $V$ corresponding to the set of integers $\mathbb {Z}$ and the edge set is denoted by the ordered pair $E'=\lbrace (i$, $i+1)| i\in \mathbb{Z}\rbrace$. Let $\mathcal R=\lbrace r_1, r_2 \ldots r_n\rbrace$ denote the set of robots that are deployed at the nodes of $G$, where $G$ is the input infinite grid graph defined as the usual \textit{Cartesian Product} of the graph $P \times P$. Let $r_i(t)$ denote the node occupied by the robot $r_i\in \mathcal R$ at time $t$. Assume that $\mathcal R(t)$ denotes the set of all such distinct nodes occupied by the robots in $\mathcal R$ at time $t$. Since the robots are deployed at the nodes of an infinite square grid, they have an agreement on a common measure of unit distance. The input grid graph also comprises some prefixed nodes designated as \textit{parking nodes}. Let $\mathcal P= \lbrace p_1,p_2,..., p_m \rbrace$ denote the set of parking positions. In the initial configuration, the parking nodes are located at the distinct nodes of the grid. A robot may be deployed at one of the parking nodes in the initial configuration.

\noindent The movements of the robots are restricted along the grid lines. At any instant of time, a robot can move only to one of its four neighboring nodes. The movement of the robot is assumed to be instantaneous, i.e., the robot can be observed only at the nodes of the graph and not on the edges. In other words, no robot can be seen while moving. A robot's vision is assumed to be global, meaning that each robot is equipped with visibility sensors that allow it to observe the whole grid graph. 

\subsection{Terminologies and Definitions}
The remainder of the paper uses the notations and terminologies listed below.
\begin{itemize}
    \item \textbf{Distance between two nodes:} Let $d(u,v)$ denote the distance between two nodes $u$ and $v$. 
    \item \textbf{Capacity of a parking node:} The \textit{capacity} of a parking node given as an input is defined as the maximum number of robots the parking node can accommodate. A parking node is said to be \textit{saturated} if it contains exactly the number of robots equal to its capacity. A parking node is said to be \textit{unsaturated} if it is not saturated. Let $\mu:V\rightarrow \mathbb{N} \cup \lbrace 0 \rbrace$ be defined as a function, where:
    	\[ \mu(v)=\begin{cases} 
		0 & \text{if} \;v \textrm { is not a parking node} \\
		\textit{\textrm{ capacity of the parking node}} & \textrm{ otherwise} \\
		
		\end{cases}
		\]
     In the initial configuration, let $k_i$ be the capacity of a parking node $p_i$, $\forall i = 1, 2, \ldots, m$. 
    
    \item \textbf{Symmetry of a configuration $\boldsymbol{C(t)}$:} Two graphs $G_1= (V_{G_1}, E_{G_1})$ and $G_2= (V_{G_2}, E_{G_2})$ are said to be isomorphic if there exists a bijection $\phi: V_{G_1} \rightarrow V_{G_2}$ such that any two nodes $u,v \in V_{G_1}$ are adjacent in $G_1$ if and only if $\phi(u)$, $\phi(v)$ $ \in V_{G_2}$ are adjacent in $G_2$. An automorphism on a graph $G$ is a permutation of its nodes mapping edges to edges and non-edges to non-edges. Let $\lambda_t$ be defined as a function that denotes the number of robots residing on $v$ at time $t$. Without any ambiguity, we denote the function $\lambda_t$ by $\lambda$. $C(t)=(\mathcal R(t)$, $\mathcal P, \lambda, \mu$) denotes the \textit{system configuration} at any time $t$. An automorphism of a graph can be extended to the automorphism of a configuration. Two configurations are said to be isomorphic if there exists an \textit{automorphism} $\phi$ of the input grid graph such that $\lambda(v)=\lambda(\phi(v))$ and $\mu(v)=\mu(\phi(v))$, for all $v\in V$. The set of all automorphisms of a configuration forms a group which is denoted by $Aut(C(t)$, $\lambda, \mu)$. If $\vert Aut(C(t), $ $\lambda, \mu)\vert=1$, then the configuration is asymmetric. Otherwise, the configuration is said to be symmetric. We assume that the infinite grid is embedded in the \textit{Cartesian plane}. As a result, a grid can admit only three types of automorphism and combinations of them, 
    \begin{enumerate}
        \item \textit{translation}: defined by the shifting of the nodes to the same extent.
        \item \textit{reflection}: defined by the line of reflection axes.
        \item \textit{rotation}: defined by the angle of rotation and the center of rotation.
    \end{enumerate} 
    The reflection axis can be horizontal, vertical or diagonal. It can either pass through the nodes or edges of the grid. If a configuration admits rotational symmetry, then the center of rotation can be either a node, the center of an edge or the center of the area surrounded by four nodes. The angle of rotation can be either $90^{\circ}$ or $180^{\circ}$. Since the number of occupied nodes is finite, a translation symmetry is not admissible. Let $MER$ be the \textit{minimum enclosing grid} containing all the occupied nodes of $C(t)$. Assume that the dimension of $MER$ is $a\times b$. The number of grid edges on a side of $MER$ is used to define its length. 

    \begin{figure}[h]
				\centering
			{
				\includegraphics[width=0.27\columnwidth]{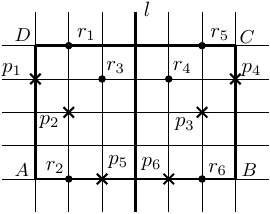}
			}
			\caption{The configuration is symmetric with respect to $l$. The crosses represent parking nodes and the black circles represent robot positions}
			\label{figure2}
		\end{figure}
  
\item \textbf{View:} Starting from one of the corners of $MER$, the entire grid is scanned in a direction parallel to the width of the rectangle. While scanning the grid, we associate the pair $(\lambda(v), \mu(v))$ to each node $v$ that the string encounters. Similarly, we can define the string associated with the same corner and encounter the nodes of the grid in the direction parallel to the length of the grid. Consider the eight senary strings of length $ab$ that are associated with the corners of $MER$, with two senary strings defined for each corner of $MER$. Let the two strings defined for a corner $i$ be denoted by $s_ {ij}$ and $s_ {ik}$.

\noindent If $MER$ is a non-square rectangle, we can distinguish between the two strings associated with a given corner by looking at the string that runs parallel to the side with the shortest length. Consider any particular corner $i$ of $MER$. Assume that $|ij|<|ik|$. We consider the direction parallel to $ij$ as the \textit{string direction} associated to $i$. We define $s_i=s_{ij}$ as the \textit{string representation} associated to the corner $i$. The direction parallel to the larger side is defined as the \textit{non-string direction} associated to the corner $i$. In the case of a square grid, between the two strings associated to a corner, the \textit{string representation} is defined as the larger lexicographic string, i.e., $s_i=max(s_{ij}, s_{ik})$, where the maximum is defined according to the lexicographic ordering of the strings. If the configuration is asymmetric, we will always get a unique largest lexicographic string. Without loss of generality, let $s_{i}$ be the largest lexicographic string among all the strings associated to the corners. Then we refer to $i$ as the {\it key corner}. If the configuration is asymmetric, the robots can be ordered according to the key corner and the string direction. A \textit{non-key corner} is defined as one that is not a key corner. In Figure \ref{figure2}, assume that the capacity of each parking node is 1. The lexicographic string associated with the corners $C$ and $D$ are $s_{CB}= s_{DA}$ ((0,0), (0,1), (0,0), (0,0), (0,0), (1,0), (0,0), (0,1), (0,0), (1,0), (0,0), (1,0), (0,0), (0,0), (0,1), (0,0), (0,0), (0,0), (0,0), (0,0), (0,0), (1,0), (0,0), (0,0), (0,1), (1,0), (0,0), (0,1), (0,0), (1,0), (0,0), (0,1), (0,0), (0,0), (0,0)). The strings $s_{CB}= s_{DA}=$ are the maximum lexicographic strings associated and hence $C$ and $D$ are the key corners. The \textit{configuration view} of a node is defined as the tuple ($d', x$), where $d'$ denotes the distance of a node from the key corner in the string direction and $x$ denotes the type of the node, i.e., $x$ is either an empty node, parking node or a robot position.
		\begin{figure}[ht]
				\centering
			{
				\includegraphics[width=0.28\columnwidth]{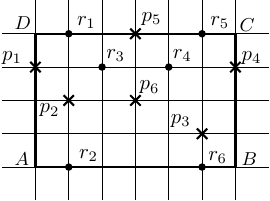}
			}
			\caption{Figure highlighting the definition of leading corner.}
			\label{figure1}
		\end{figure}

\item \textbf{Symmetricity of the set} $\mathbf{\mathcal P}$: We may define the symmetry of the set $\mathcal P$ in the same way as we define the symmetry of a configuration. The smallest grid-aligned rectangle that includes all the parking nodes is denoted as $M_{\mathcal P}$. 
		
\noindent We can define a string $\alpha_{i}$ similar to $s_{i}$. The only difference is that each node $v$ is associated with $\mu(v)$ instead of the pair $(\lambda(v), \mu(v))$. If the parking nodes are asymmetric, a unique lexicographic largest string $\alpha_i$ always exists. If the parking nodes are not asymmetric, then the parking nodes are said to be symmetric. The corner with which the lexicographic largest string $\alpha_i$ is associated is defined as the \textit{leading corner}. In Figure \ref{figure1}, assume that the capacity of $p_1=p_2=p_3=3$ and $p_4=p_5=p_6=2$. Then, $\alpha_{DA}=03000 00300 00000 20200 00000 00030 02000$ is the largest lexicographic string among the $\alpha_i's$ and hence we have $D$ as the leading corner. According to this definition of symmetricity of the set $\mathcal P$, the parking nodes that are located in the symmetric positions must have equal capacities.
\end{itemize}
\begin{definition}
Let $C(0)$ be any given initial configuration. A parking node $p_i$ is said to have a higher order than the parking node $p_j$ if it appears after $p_j$ in the string representation $\alpha_k$, associated to some leading corner $k$ of $MER$. Similarly, a robot $r_i$ has a higher order or has a higher configuration view than $r_j$ if it appears after $r_j$ in the string representation $s_k$, associated to some key corner $k$ of $MER$.
\end{definition}
According to this definition, if the configuration has two leading corners, then there exist two parking nodes having the highest order.

\section{Problem Definition and Impossibility Results} \label{s3}

\subsection{Problem Definition}
Let $C(t)=((R(t)$, $P, f, \lambda$) denote the system configuration at any time $t$. Each parking node $p_i$ has a capacity $k_i$, which is the maximum number of robots it can accommodate at any instant of time. For each parking node $p_i$, the capacity $k_i$ is given as an input. The number of robots is assumed to be equal to $\sum\limits_{i=1}^m k_i$, where $m$ is the total number of parking nodes located at the nodes of an infinite grid. In an initial configuration, all the robots occupy distinct nodes of the grid. The goal of the parking problem is to transform any initial configuration at some time $t >0$ into a configuration such that:
\begin{itemize}
    \item each parking node $p_i$ is saturated, i.e., $p_i$ contains exactly $k_i$ robots on it.
    \item each robot is stationary.
	\item any robot taking a snapshot in the look phase at time $t$ will decide not to move.
\end{itemize}
Note that if each $k_i=1$, the problem is reduced to the embedded pattern formation problem.

\subsection{Partitioning of the initial configurations}
All the initial configurations can be partitioned into the following disjoint classes.

\begin{enumerate}
    \item $\mathcal I_1$: The parking nodes are asymmetric (In Figure \ref{figure1}, the parking nodes are asymmetric).
    	\begin{figure}[h]
			\centering
			{
				\includegraphics[width=0.239\columnwidth]{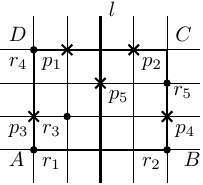}
			}
			\hspace*{0.21cm}
			{
				\includegraphics[width=0.20\columnwidth]{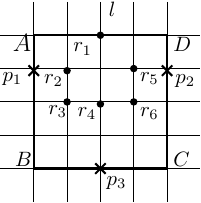}
			}
			\hspace*{0.21cm}
			{
				\includegraphics[width=0.26\columnwidth]{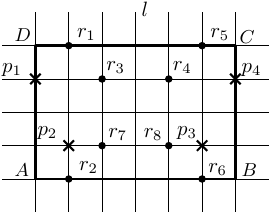}
			}
			\caption{Examples of $\mathcal I_{21}$, $\mathcal I_{221}$ and $\mathcal I_{222}$ configuration } 
			\label{sym1}
			\end{figure}
    \item $\mathcal I_2$: The parking nodes are symmetric with respect to a unique line of symmetry $l$. This class of configurations can be further partitioned into:
    \begin{enumerate}
			    \item $\mathcal I_{21}$: $C(t)$ is asymmetric (In Figure \ref{sym1}(a), with the assumption that each parking node has the same capacity 1, the configuration is asymmetric with the parking nodes symmetric with respect to $l$).
			    \item $\mathcal I_{22}$: $C(t)$ is symmetric with respect to $l$. This can be further partitioned into the following disjoint classes.
			   \begin{enumerate}
        \item $\mathcal I_{221}$: There exists at least one robot position on $l$ (In Figure \ref{sym1}(b), with the assumption that the capacity of each parking node is 2, $C(t)$ is symmetric with respect to $l$ and there exist robot positions $r_1$ and $r_4$ on $l$).
        \item $\mathcal I_{222}$: There does not exist any robot position on $l$. Also, there are no parking nodes on $l$ (In Figure \ref{sym1}(c), with the assumption that the capacity of each parking node is 2, $C(t)$ is symmetric with no robots or parking nodes on $l$).
        \item $\mathcal I_{223}$: There does not exist any robot position on $l$, but there exists at least one parking node on $l$ (In Figure \ref{sym5}(a), with the assumption that the capacity of each parking node not at $l$ is 1, $C(t)$ is symmetric with respect to $l$ and there exists parking node $p_5$ at $l$ with capacity 2).
    \end{enumerate}
    \end{enumerate}
    		\begin{figure}[h]
			\centering
			{
				\includegraphics[width=0.239\columnwidth]{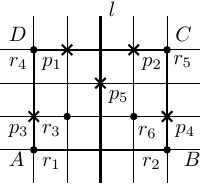}
			}
			\hspace*{0.21cm}
			{
				\includegraphics[width=0.200\columnwidth]{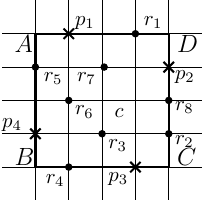}
			}
			\hspace*{0.21cm}
			{
				\includegraphics[width=0.205\columnwidth]{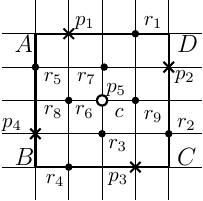}
			}
			\caption{Examples of $\mathcal I_{223}$ configuration, $\mathcal I_{31}$ configuration and $\mathcal I_{321}$ configuration. } 
			\label{sym5}
			\end{figure}
     \item $\mathcal I_3$: The parking nodes are symmetric with respect to rotational symmetry, with $c$ as the center of rotational symmetry. This class of configurations can be further partitioned into:
    \begin{enumerate}
        \item $\mathcal I_{31}$: $C(t)$ is asymmetric (In Figure \ref{sym5}(b), with the capacity of each parking node assumed to be 2, $C(t)$ is asymmetric with the parking nodes being symmetric with respect to rotational symmetry). 
        \item $\mathcal I_{32}$: $C(t)$ is symmetric with respect to $c$. This can be further partitioned into the following disjoint classes.
    \begin{enumerate}
        \item $\mathcal I_{321}$: There exists a robot position on $c$ (In Figure \ref{sym5}(c), with the assumption that the capacities of parking nodes $p_1$, $p_2$, $p_3$ and $p_4$ equal 1 and the capacity of the parking node $p_5$ equals 5, $C(t)$ is symmetric with respect to rotational symmetry. The robot $r_6$ is at the parking node $p_5$).
        \item $\mathcal I_{322}$: There does not exist a robot position or parking node on $c$ (In Figure \ref{sym2}(a), with the assumption that the capacity of each parking node is 1, $C(t)$ is symmetric with respect to rotational symmetry without any robot or parking node on $c$).
        \item $\mathcal I_{323}$: There exists a parking node on $c$, but no robot on $c$ (In Figure \ref{sym2}(b), with the assumption that the capacities of the parking nodes $p_1$, $p_2$, $p_3$ and $p_4$ equal 1 and the capacity of the parking node $p_5$ equals 4, $C(t)$ is symmetric with respect to rotational symmetry, with a parking node $p_5$ on $c$).
    \end{enumerate} 
    \end{enumerate}
     \end{enumerate}

    	\begin{figure}[h]
			\centering
			{
				\includegraphics[width=0.230\columnwidth]{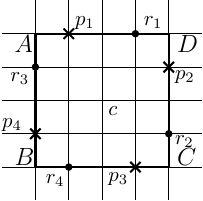}
			}
			\hspace{0.75cm}
				{
				\includegraphics[width=0.230\columnwidth]{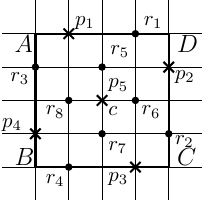}
			}
			\caption{Examples of $\mathcal I_{322}$ and $\mathcal I_{323}$ configuration.} 
			\label{sym2}
			\end{figure}
   
   \noindent In the remainder of the paper, we assume that $l$ is the line of symmetry if the parking nodes admit a single line of symmetry. If the parking nodes admit rotational symmetry, then $c$ is the center of rotational symmetry. We also assume that if the parking nodes admit rotational symmetry, then $l$ and $l'$ are perpendicular lines passing through $c$. Note that these lines divide the grid into four quadrants.
\subsection{Impossibility Results} 
In this subsection, we define all those initial configurations and the values of the capacities of the parking nodes for which the parking problem is unsolvable. First, we provide some definitions that are relevant in understanding the impossibility results.
\begin{lemma} \label{chap4-lemma1}
Let $\mathcal A$ be any algorithm for the parking problem in infinite grids. If there exists an execution of $\mathcal A$ such that the configuration $C(t)$ contains a robot multiplicity at a node that is not a parking node, then $\mathcal A$ cannot solve the parking problem.  
\end{lemma}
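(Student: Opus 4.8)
The plan is to argue by contradiction, exploiting the indistinguishability of co-located robots together with the adversarial power of the asynchronous scheduler. Assume $\mathcal{A}$ solves the parking problem and fix an execution reaching a configuration $C(t)$ with $\lambda(v)=c\geq 2$ for some node $v$ satisfying $\mu(v)=0$. Let $R_v$ be the set of $c$ robots on $v$ at time $t$. The starting observation is that the robots of $R_v$ are mutually indistinguishable: they occupy a common node, so each of them perceives exactly the same global snapshot $(\mathcal{R}(t),\mathcal{P},\lambda,\mu)$ in its Look phase. Since the robots are anonymous, oblivious, and run the same deterministic algorithm, any two robots of $R_v$ activated from the same node compute the same destination. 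Note that the assumed global-strong multiplicity detection does not help here: both robots see the multiplicity at $v$, but they react to it identically.

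The technical backbone is the construction of a schedule, admissible in $\mathcal{ASYNC}$, that from time $t$ onward keeps $R_v$ merged. First I would let the robots of $R_v$ execute their Look--Compute--Move cycles as a synchronized block: they take their snapshots on the same configuration, compute the common destination, and perform their moves together, so that afterwards they are again co-located (either on a single neighbor of $v$ or back on $v$). I would then prove, by induction on the number of completed cycles of this block, the invariant that in every configuration reached under the schedule the robots of $R_v$ occupy one common node. The induction step needs only that the block begins each cycle on a single node, which makes the two snapshots identical and the computed moves identical; crucially, moves made by robots outside $R_v$ cannot break the invariant, since both robots of $R_v$ observe the same global configuration and therefore respond identically no matter what the other robots do.

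By the invariant the robots of $R_v$ never occupy distinct nodes, so under this schedule they remain a multiplicity for all time. It remains to contradict termination in a valid configuration. In any terminal configuration every non-parking node carries no robot and every parking node $p_i$ carries exactly $k_i$ robots; in particular the block $R_v$, being permanently co-located, must sit entirely on a single parking node $p$ and contribute a block of size $c$ to it. I would close the proof by showing that this rigidity is incompatible with a correct terminal configuration: the exact-capacity requirement $\lambda(p_i)=k_i$ together with conservation of the $n=\sum_i k_i$ robots forces these robots to be delivered to distinct parking nodes (for instance whenever every parking node able to receive them has capacity smaller than $c$, or more generally when the target distribution dictated by the capacities assigns them to different nodes), which a permanently merged block can never realize. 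I expect this final step to be the main obstacle: the merging invariant is essentially forced by determinism and anonymity, whereas turning ``$R_v$ stays merged'' into ``no valid terminal configuration exists'' requires carefully ruling out the borderline case in which a single parking node of capacity at least $c$ could legitimately absorb the whole block. I would handle this either by appealing to the fixed target distribution imposed by the capacities or by letting the adversary also steer the remaining robots so that no saturating assignment survives.
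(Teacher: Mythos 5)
Your argument is essentially the paper's: the paper likewise lets the adversary activate the co-located robots in lock-step so that, being anonymous, oblivious, and taking identical snapshots from the same node, they compute identical destinations and remain merged forever, and it then concludes that the merged robots can never reach distinct parking nodes. The merging invariant, which you develop in more detail and more carefully (explicit induction on block cycles, explicit remark that multiplicity detection does not break the tie), is exactly the intended core of the proof.

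The obstacle you flag at the end is genuine, and you should not expect to remove it, because the paper's own proof disposes of it with the phrase ``without loss of generality, we assume that the capacity of each parking node is 1,'' which is not a legitimate reduction: the capacities are part of the input and cannot be altered after the execution witnessing the multiplicity has been fixed. In the borderline case you isolate the lemma as literally stated fails --- for instance, with a single parking node of capacity $n$, or more generally whenever the merged block of size $c$ can be assigned in its entirety to one parking node of capacity at least $c$ while the remaining robots saturate the rest, a permanently merged block is perfectly compatible with a correct terminal configuration. Consequently neither of your two proposed patches (appealing to the target distribution, or steering the other robots adversarially) can succeed in full generality. What is actually true, and what the paper uses downstream to justify collision-free motion, is the conditional statement that once merged at a non-parking node the robots can never again be separated, so the problem becomes unsolvable whenever the capacities force them onto distinct parking nodes; your proof establishes exactly this, and in that respect it is at least as complete as the proof in the paper.
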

\begin{proof}
 Assume that the capacity of a parking node $p_i$ is $k_i$, where $k_i \geq 1$ and $i = 1, 2, 3, \ldots m$. So, without loss of generality, we assume that the capacity of each parking node is 1, i.e., in the final configuration, there must be exactly one robot at each parking node. Moreover, we assume that the scheduler is fully-synchronous, which implies that time is divided into global rounds, and all the robots are activated simultaneously. Suppose, at time $t >0$, a robot multiplicity is formed at one of the nodes, which is not a parking node. The robots at the multiplicity have the same local view of the configuration. If the adversary forces both the robots composing the multiplicity to perform the same move, the multiplicity in $C(t)$ is maintained for all $C(t')$, for $t'>t$. As a result, the robots in the multiplicity cannot reach different parking nodes. This proves that, under the execution of $\mathcal A$, the robots cannot reach the final configuration.  
\end{proof}
\noindent This lemma ensures that during the execution of any algorithm that solves the parking problem, the robots must perform a collision-less movement at all stages of the algorithm, unless the robot reaches its respective target parking node. Suppose the robots are oblivious and not endowed with global-strong multiplicity detection capability. In that case, they cannot detect whether exactly the $k_i$ number of robots reaches the parking node $p_i$. So, we formalize the result in the following lemma:
\begin{lemma}
Without the global-strong multiplicity detection capability of the robots, the parking problem is unsolvable.
\end{lemma}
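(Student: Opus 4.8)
The plan is to argue by contradiction: assume some deterministic algorithm $\mathcal{A}$ solves the parking problem for robots that lack global-strong multiplicity detection, and exhibit a single instance together with two configurations on which $\mathcal{A}$ is forced to behave identically but is required to behave differently. To make the impossibility as strong as possible, I will grant the robots the best capability strictly weaker than global-strong detection, namely \emph{global-weak} multiplicity detection: for every node a robot can tell whether it is empty, holds exactly one robot, or holds a multiplicity, but it cannot read off the exact size of a multiplicity. Since every weaker model is subsumed by this one, failure here implies failure in all of them.

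For the construction I will use an instance with two parking nodes $p$ and $q$ of capacity $3$ each (so $\sum_i k_i = 6$), because a node can appear as a multiplicity yet be unsaturated only when its capacity is at least $3$; with capacity at most $2$ a ``multiple'' node is automatically full and carries no hidden information. Consider the two configurations $A$ and $B$ that are identical everywhere except on $p$ and $q$: in $A$ node $p$ holds $3$ robots and $q$ holds $2$, while in $B$ node $p$ holds $2$ and $q$ holds $3$; in both configurations one further robot sits alone at an auxiliary node $a$ and the rest of the grid is empty. Both configurations use exactly $6$ robots, both respect the capacity bound, and, crucially, they induce the same empty/single/multiple labelling of every node (both show $p$ and $q$ as multiplicities and $a$ as a singleton). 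Hence $A$ and $B$ are indistinguishable to every robot under global-weak detection, and are separated only by the exact counts at $p$ and $q$, which only global-strong detection could reveal.

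Next I will pin down the unique correct behaviour from each configuration and derive the contradiction. In $A$ the node $p$ is already saturated while $q$ still needs one robot, so the only way to reach a valid final configuration is to route the surplus robot (the one whose flow originates at $a$) into $q$; symmetrically, in $B$ it must be routed into $p$. Running $\mathcal{A}$ on $A$ and on $B$ under a common fully synchronous schedule, obliviousness forces every robot to compute the same destination in both runs as long as the views coincide; and the views keep coinciding, because the sole difference between the two runs, the hidden counts $(3,2)$ versus $(2,3)$ at the two ``multiple'' nodes, never changes the visible labelling while those counts stay at least $2$. Thus the surplus robot traces the same path in both runs and arrives at the same parking node. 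If $\mathcal{A}$ is correct on $A$ it steers this robot to $q$; the identical move in $B$ then either drives $q$ above its capacity $3$ (forbidden by the capacity constraint, so the move is blocked and $p$ is left permanently unsaturated) or, if overfilling is permitted, leaves $B$ in the invalid state $p{:}2,\,q{:}4$. Either way $\mathcal{A}$ fails on $B$, and the symmetric choice fails on $A$, contradicting correctness.

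The main obstacle is guaranteeing that the coupling between the two runs is never broken before the surplus robot parks, i.e. ruling out that $\mathcal{A}$ first moves a robot \emph{off} $p$ or $q$: such a move would send one count from $2$ to $1$ and the other from $3$ to $2$, turning a multiplicity into a singleton in exactly one of the two configurations and thereby making them distinguishable. I would handle this by focusing on the first round at which any count at $p$ or $q$ changes, and arguing that, by the hard capacity constraint together with Lemma~\ref{chap4-lemma1} (no multiplicity may ever be created off a parking node without dooming the execution), any such earlier move already produces an unsaturated parking node that the oblivious algorithm cannot later repair in both runs simultaneously. A secondary point to make rigorous is that $A$ and $B$ are genuinely reachable, both arising from a common initial configuration (with the six robots at distinct nodes) under two different asynchronous schedules that differ only in the order in which robots settle onto $p$ and $q$, so that the ambiguity is one the algorithm must actually face.
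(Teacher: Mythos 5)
The paper never actually proves this lemma --- it is stated bare, preceded only by the one-sentence remark that without exact counts the robots ``cannot detect whether exactly $k_i$ robots reach $p_i$'' --- so there is no proof of record to compare against, and your attempt is the only real argument on the table. Its core (two configurations $A$ and $B$ with hidden counts $(3,2)$ and $(2,3)$ at two capacity-$3$ parking nodes, indistinguishable under global-weak detection yet requiring different completions) is a sound indistinguishability template, and the coupling obstacle you flag can in fact be closed cleanly: under a synchronous adversary all co-located robots share a view and move identically, so any move off $p$ or $q$ dumps an entire multiplicity onto a non-parking node and is fatal by Lemma~\ref{chap4-lemma1}; hence only the robot at $a$ can move and the two runs stay coupled.

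The genuine gap is the point you defer as ``secondary'': that a correct algorithm must ever reach the pair $A$, $B$ at all. The adversary controls only timing, not destinations, and a deterministic algorithm is free to never create two simultaneously unsaturated multiplicities. Since the capacities are given as input, every robot knows $n=\sum_{i} k_i$, so whenever at most one node carries a multiplicity its exact count equals $n$ minus the number of visibly singleton-occupied nodes --- global-weak detection simulates global-strong detection on all such configurations. An algorithm that saturates the parking nodes one at a time (certainly available from asymmetric initial configurations) therefore never faces your ambiguity, and your construction does not rule it out; any true impossibility must be localized to symmetric configurations in which symmetric pairs of multiplicities are forced to grow concurrently and asynchrony can desynchronize their counts, which is a different and harder argument. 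A smaller flaw: global-weak detection does not subsume local-strong detection (they are incomparable, and their combination is still strictly weaker than global-strong), so even a complete global-weak argument would not cover every model ``without global-strong detection'' --- a robot sitting on $p$ or $q$ with local-strong detection distinguishes $A$ from $B$ immediately.
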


\begin{lemma} \label{chap4-lemma3}
If the initial configuration $C(0) \in \mathcal I_{223}$ is such that the capacity of a parking node on $l$ is an odd integer. Then the parking problem is unsolvable.
\end{lemma}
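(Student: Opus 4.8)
The plan is to exhibit a single admissible execution of an arbitrary algorithm $\mathcal{A}$ along which the parking node on $l$ with odd capacity can never be saturated; this suffices to prove unsolvability. Since every fully synchronous execution is also a valid asynchronous one, I will work in $\mathcal{FSYNC}$ and let the adversary force the configuration to remain symmetric with respect to the reflection $\phi$ across $l$. By hypothesis $C(0)\in\mathcal{I}_{223}$, so $\phi\in Aut(C(0),\lambda,\mu)$ and, crucially, no robot lies on $l$ in $C(0)$.

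First I would describe the symmetry-preserving adversary. In each round every robot is activated. The off-axis robots occur in mirror pairs $(r,\phi(r))$ with $r\neq\phi(r)$, and the adversary assigns them mirror-image local coordinate systems. Because the snapshot is $\phi$-symmetric, the two robots of a pair obtain identical local views, and since $\mathcal{A}$ is deterministic they compute the same local destination, hence move to a mirror pair of nodes $(g,\phi(g))$. Any robots resting on a node $w\in l$ are grouped into pairs (their number will be even by the invariant below), and each such pair is again given mirror orientations; such a pair moves to $(g,\phi(g))$, meaning either both robots land on a common node of $l$ (when $g\in l$) or they split into a symmetric off-axis pair (when $g\notin l$). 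In every case each move is matched by its mirror move, so $C(t)$ stays $\phi$-symmetric throughout the execution.

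The key step is the parity invariant: along this execution every node of $l$ carries an even number of robots at all times. I would prove this by induction on the rounds. The base case is immediate, since $C(0)$ has no robot on $l$. For the inductive step, note that a node $v\in l$ gains robots only when an off-axis mirror pair or an on-axis pair lands on it, contributing $+2$ each time — indeed, if $r$ with $\phi(r)\neq r$ reaches $v$ then $\phi(r)$ reaches $\phi(v)=v$ as well — and it loses robots only when the (paired) robots occupying it move away, contributing $-2$ each time; thus its occupancy changes by an even amount and remains even. This is exactly where the hypothesis $C(0)\in\mathcal{I}_{223}$, i.e. that no robot starts on $l$, is used to seed the induction.

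Finally I would derive the contradiction. Let $p$ be the parking node on $l$ whose capacity $k$ is odd. In any final configuration $p$ must contain exactly $k$ robots, an odd number; but the invariant shows $p$ holds an even number of robots at every instant of the constructed execution, so the target occupancy $k$ is never attained. Hence $\mathcal{A}$ fails on this execution, and since $\mathcal{A}$ was arbitrary the parking problem is unsolvable for $C(0)$. I expect the main obstacle to be the rigorous justification of the on-axis case of the adversary's strategy: one must argue that a (necessarily even) multiplicity resting on a node of $l$ cannot be split into an odd on-axis occupancy, which relies on the coincident robots sharing an identical symmetric view and on the adversary's freedom to impose mirror chiralities, so that every departure from and arrival on $l$ happens two robots at a time.
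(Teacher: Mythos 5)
Your proof is correct and follows essentially the same approach as the paper: a symmetry-preserving adversary (mirror-paired activations with mirrored chiralities) forces every crossing onto or off the axis $l$ to happen two robots at a time, so the odd-capacity parking node on $l$ can never be saturated. The paper phrases this under an $\mathcal{SSYNC}$ scheduler that activates each robot together with its mirror image and concludes via an inseparable multiplicity at $p$, whereas you work in $\mathcal{FSYNC}$ and make the contradiction explicit through a parity invariant on the nodes of $l$ --- a cleaner formalization of the same underlying idea.
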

\begin{proof}
Assume that there exists an algorithm $\mathcal A$ that solves the parking problem starting from any arbitrary initial configuration in $\mathcal I_{223}$ and the capacity of the parking node $p$ on $l$ is an odd integer. Without loss of generality, we assume that the capacity of $p$ is $2k+1$, where $k$ is a positive integer. Consider the scheduler to be semi-synchronous with the additional constraint that a robot and its symmetric image with respect to $l$ are activated and perform the Look-Compute-Move cycle simultaneously. In that case, a robot $r$ and its symmetric image $\phi(r)$ have the same local view of the configuration and they execute the same deterministic algorithm $\mathcal A$. Assume that at time $t>0$, there exists a $2k$ number of robots on $p$. Since the capacity of $p$ is $2k+1$, at time $t'>t$, a robot $r$ must start moving towards $p$. As the configuration is symmetric, there exists at least one execution of $\mathcal A$ out of different execution paths, where any move that $r$ performs according to $\mathcal A$ would result in a situation where $\phi(r)$ performs the same move. These movements of the robots ensure that at any moment of time, the configuration remains symmetric. Since there is no robot position on $l$ in the initial setup and all robots move in pairs, moving a robot $r$ to $l$ would also move $\phi(r)$ to the same node on $l$. While the robots $r$ and $\phi(r)$ move towards $p$, a multiplicity node is created at $p$, which cannot be separated by any deterministic algorithm. Hence, the parking problem is unsolvable, according to Lemma \ref{chap4-lemma1}. 
\end{proof}

Next, if $C(t)$ admits multiple lines of symmetry and if there exists a parking node on a line of symmetry with odd capacity, then the problem is unsolvable. We next state the following corollary of the Lemma \ref{chap4-lemma3} without going to the proof details.
\begin{corollary} \label{corolary1}
If the initial configuration $C(0) \in \mathcal I_{323}$, then the parking problem is unsolvable if the capacity of the parking node at $c$ is neither a multiple of 4 nor 2, depending on whether the angle of rotation is either $90 ^{\circ}$ or $180^{\circ}$.
\end{corollary}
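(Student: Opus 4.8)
The plan is to mirror the structure of the proof of Lemma~\ref{chap4-lemma3}, adapting the symmetry-preservation argument from a single reflection axis to the richer automorphism group of a rotationally symmetric configuration. The corollary concerns $C(0)\in\mathcal{I}_{323}$, where the parking nodes (and the full configuration) are invariant under a rotation about the center $c$, there is a parking node at $c$ but no robot at $c$, and I must show the problem is unsolvable unless the capacity of the parking node at $c$ is a multiple of~$4$ (when the rotation angle is $90^\circ$) or a multiple of~$2$ (when the angle is $180^\circ$).

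First I would suppose, for contradiction, that an algorithm $\mathcal{A}$ solves the problem from such a configuration. Following the template of Lemma~\ref{chap4-lemma3}, I would invoke a semi-synchronous (adversarial) scheduler with the constraint that a robot $r$ and \emph{all} its images under the rotation group are activated together and perform their Look-Compute-Move cycles simultaneously. Because the configuration is rotationally symmetric and the robots are oblivious, anonymous, and run the same deterministic algorithm, $r$ and each of its images share an identical local view; hence if the adversary selects an execution in which $r$ moves in a given direction, each image performs the rotated move. This keeps the configuration rotationally symmetric at every step. The key structural point is that robots partition into orbits under the rotation group: an orbit of a point $\neq c$ has size equal to the order of the rotation (either $4$ or $2$), while $c$ is its own orbit of size~$1$. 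Since no robot starts at $c$, every robot lies in an orbit of the full size, so robots can only ever be ``fed'' into $c$ in complete orbits—i.e., in batches of $4$ (for $90^\circ$) or $2$ (for $180^\circ$). The moment a robot reaches $c$, so do all its symmetric images simultaneously, creating a multiplicity of size $4$ or $2$ at $c$ that the symmetric execution can never separate.

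From here the arithmetic is immediate: the number of robots accumulated at $c$ across the execution is always a multiple of the orbit size. If the capacity at $c$ is not a multiple of $4$ (resp.\ $2$), then the configuration can never reach the saturated state at $c$ (either $c$ is never filled to capacity, or filling it requires overshooting, which is impossible since a parking node cannot exceed its capacity), so $\mathcal{A}$ fails to reach the target configuration on this execution. Moreover, even if one tried to route a single robot toward $c$ to complete an odd remainder, the symmetric-activation constraint forces its images along too, so an unsplittable multiplicity is formed at $c$; by Lemma~\ref{chap4-lemma1} the problem is then unsolvable. This contradicts the assumed correctness of $\mathcal{A}$.

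The main obstacle I anticipate is ruling out the escape route in which the algorithm breaks symmetry \emph{before} any robot reaches $c$—for instance, by first saturating the off-center parking nodes so that the residual configuration becomes asymmetric and a lone robot can then be singled out and sent to $c$. To close this gap I would argue that \emph{some} adversarial execution maintains rotational symmetry for as long as symmetric views persist: the adversary always has the option, at each activation, of moving an entire orbit in the symmetric fashion, and as long as the configuration remains rotationally symmetric each robot's view coincides with that of its images, so no deterministic rule can distinguish them. Thus there exists at least one execution path that preserves symmetry until the instant robots are committed to $c$, at which point the multiplicity argument applies. Making this ``persistence of a symmetric execution'' precise—showing the adversary can always steer along the symmetric branch without the algorithm having an opportunity to break ties—is the delicate part, but it is exactly the same mechanism already used successfully in Lemma~\ref{chap4-lemma3}, now replicated across the order-$4$ or order-$2$ rotation group.
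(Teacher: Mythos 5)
The paper states this corollary explicitly as a consequence of Lemma~\ref{chap4-lemma3} ``without going to the proof details,'' and your argument is precisely the intended adaptation: replace the reflection $\phi$ by the order-$4$ (or order-$2$) rotation group, note that since no robot starts at $c$ every robot lies in a full-size orbit, and let the adversary run a symmetric semi-synchronous execution so that robots can only reach $c$ in batches equal to the orbit size, making exact saturation of $c$ impossible when its capacity is not a multiple of $4$ (resp.\ $2$). Your proposal is correct and supplies the details the paper omits, including the ``persistence of a symmetric execution'' point that Lemma~\ref{chap4-lemma3} also relies on.
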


\noindent Let $\mathcal U$ be the set of all configurations that are unsolvable according to Lemma \ref{chap4-lemma3} and Corollary \ref{corolary1}.

\section{Algorithm} \label{s4}
In this section, we propose a deterministic distributed algorithm for solving the parking problem in infinite grids. A sufficient condition for an initial configuration to be unsolvable, i.e., when the parking problem cannot be deterministically solved, is provided by the impossibility results given in Section \ref{s3}. The parking problem is solved using a deterministic distributed algorithm in this section for all initial configurations except from those indicated in Lemma \ref{chap4-lemma3} and Corollary \ref {corolary1}. The fundamental strategy of the proposed algorithm is to select a specific target parking node and permit a number of robots to move towards it, where the number of robots is equal to the parking node's capacity. The proposed algorithm mainly consists of the following phases:  \textit{Guard Selection and Placement (GS)} phase, \textit{Target Parking Node Selection (TPS)} phase, \textit{Candidate Robot Selection (CR)} phase,  \textit{Guard Movement (GM)} phase and \textit{Symmetry Breaking (SB) phase}. If the parking nodes are symmetric and the configuration is asymmetric, then in the GS phase, a guard is selected and moved in such a way that the configuration remains asymmetric during the execution of the algorithm. The robots identify the current configuration and determine whether a unique parking node could be selected for parking. If the configuration is symmetric, two parking nodes may be chosen for parking at any given time during the TPS phase. The closest number of robots equaling the capacity of the target parking node is selected in the CR phase and moves toward the target parking node in a sequential manner. While the parking node(s) with the highest orders become saturated, the next parking node is selected, which is unsaturated and has the highest order among all the parking nodes that are unsaturated. The process continues until each parking node becomes saturated. The guard moves toward its specific target parking node during the GM phase. The symmetric configurations that can be changed into asymmetric configurations are taken into consideration during the SB phase. A unique robot is selected and allowed to move toward an adjacent node such that the configuration becomes asymmetric. In the subsequent subsections, if the configuration is asymmetric or symmetric with a parking node on $l \cup \lbrace c \rbrace$, the robots can always select a unique parking node in order to begin the parking formation process.

\subsection{Ordering of the parking nodes}
In this subsection, we first consider all those configurations, where the parking nodes of the configurations can be ordered uniquely. This ordering is necessary to identify a unique parking node, which will be selected by the robots in order to initialize the parking formation. So, first, consider the case when the parking nodes are asymmetric. According to the definition of the symmetricity of the set $\mathcal P$, there exists a unique lexicographic string $\alpha_i$. As a result, there exists a unique leading corner. Consider the string representation of the string $\alpha_i$ associated to the unique leading corner. We define an ordering $\mathcal O$ of the parking nodes as follows: Consider the ordering of the parking nodes that appears from first to last in the string representation of the string $\alpha_i$ associated to the unique leading corner. Assume that $(p_1, p_2, \ldots, p_m)$ be the respective order of the parking nodes that appear in the string representation of the string $\alpha_i$ associated to the unique leading corner. This particular ordering of the parking nodes is defined as $\mathcal O_1$. Since the parking nodes are fixed nodes located at the nodes of the grid and the ordering $\mathcal O_1$ is defined with respect to the position of the parking nodes, the ordering $\mathcal O_1$ remains invariant. 

\noindent Next, consider the case when the parking nodes are symmetric with respect to a unique line of symmetry $l$ and there exist parking nodes on $l$. In this case, there may exist one or two leading corners. In case there exists a unique leading corner, then $l$ must be a diagonal line of symmetry. Note that both the strings associated with the unique leading corner are equal. Here, one of the directions is arbitrarily selected as the string direction. Consider the string representation of the string $\alpha_i$ associated to the unique leading corner. We define an ordering $\mathcal O_2$ of the parking nodes as follows: Consider the ordering of the parking nodes on $l$ that appears from first to last in the string representation of the string $\alpha_i$ associated to the unique leading corner. Assume that $(p_1, p_2, \ldots, p_z)$ be the respective order of the parking nodes on $l$ that appear in the string representation, where $z$ denotes the number of parking nodes on $l$. This particular ordering of the parking nodes is defined as $\mathcal O_2$. Next, assume the case when there exist two leading corners. Consider the string representation of the string $\alpha_i$ associated to the leading corners. In this case, the ordering $\mathcal O_2$ is defined similarly to before. Hence, we have the following observations.
\begin{observation}
If the parking nodes are asymmetric, then they can be ordered.
\end{observation}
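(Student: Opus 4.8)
The plan is to obtain the ordering $\mathcal O_1$ directly from the uniqueness of the lexicographically largest string $\alpha_i$, so the real content is to show that asymmetry of $\mathcal P$ forces that string to be unique. First I would recall the setup around $M_{\mathcal P}$: its four corners give rise to a finite family of candidate strings, where each node $v$ of the grid is recorded by $\mu(v)$ as it is encountered in a scan. In the non-square case the scanning direction at each corner is forced to run parallel to the shorter side, yielding four candidate strings; in the square case each corner carries two strings and $\alpha_i$ is the lexicographic maximum of its pair, so the global maximum is taken over the eight underlying strings. In every case the leading corner, and hence the subsequent ordering, is determined by the single largest of these strings.

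The key step is to prove that when $\mathcal P$ is asymmetric this maximum is attained by exactly one scan. I would argue this by identifying each admissible scan description, namely a starting corner together with a scanning direction, with an element of the symmetry group of the bounding box: the dihedral group of order eight when $M_{\mathcal P}$ is a square and the order-four group when $M_{\mathcal P}$ is a non-square rectangle. Applying a group element to $\mathcal P$ before scanning produces one candidate string, and two such descriptions yield identical strings precisely when they differ by an automorphism of $\mathcal P$, that is, by an element of $Aut(\mathcal P)$ preserving $\mu$. Consequently the number of distinct strings equals $|D|/|Aut(\mathcal P)|$, where $D$ is the box symmetry group. Since $\mathcal P$ is asymmetric we have $|Aut(\mathcal P)|=1$, so all candidate strings are pairwise distinct, the lexicographic maximum is achieved by a single scan, and this pins down a unique leading corner together with a unique string direction.

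Given this uniqueness the ordering is immediate. The unique maximal string $\alpha_i$ is read by traversing the grid in one fixed starting-corner-and-direction, so it induces a linear order on all scanned nodes by position of first appearance; restricting this order to the parking nodes yields the sequence $(p_1, p_2, \ldots, p_m)$ defining $\mathcal O_1$. Because distinct parking nodes occupy distinct positions in the scan, this is a strict total order, and since the parking nodes are fixed at their grid locations and the scan depends only on those locations and their capacities, $\mathcal O_1$ is invariant throughout the execution.

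The main obstacle I expect is the tie-breaking step, that is, rigorously justifying the correspondence between coincident maximal strings and nontrivial automorphisms of $\mathcal P$. Care is needed in the square case, where a single corner already contributes two strings, to confirm that the only admissible isometries of the grid are the reflections and rotations enumerated earlier, so that a string coincidence genuinely arises from an element of $Aut(\mathcal P)$ rather than from some accidental combinatorial equality; once this is established, the rest of the argument is routine.
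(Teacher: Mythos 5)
Your proposal is correct and follows essentially the same route as the paper: the paper derives $\mathcal O_1$ from the uniqueness of the lexicographically largest string $\alpha_i$ associated with a corner of $M_{\mathcal P}$, which it takes as immediate from the definition of asymmetry of $\mathcal P$. Your group-theoretic tie-breaking argument (coincident candidate strings correspond to elements of $Aut(\mathcal P)$, so asymmetry forces all four, respectively eight, strings to be pairwise distinct) merely supplies the standard justification for that step, which the paper leaves implicit.
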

\begin{observation}
If the parking nodes are symmetric with respect to a unique line of symmetry $l$, then the parking nodes on $l$ are orderable.
\end{observation}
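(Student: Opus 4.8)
The plan is to show that the ordering $\mathcal{O}_2$ constructed above is \emph{well-defined}, i.e.\ that the sequence $(p_1, p_2, \ldots, p_z)$ of parking nodes lying on $l$ does not depend on the arbitrary choices made in its definition. Since the parking nodes occupy fixed grid positions, once well-definedness is established the ordering is automatically invariant, and the observation follows. If no parking node lies on $l$ the claim is vacuous, so I assume $z \geq 1$.

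First I would classify the possible positions of $l$ relative to $M_{\mathcal{P}}$ and the resulting leading-corner structure. A diagonal reflection transposes the two axes and hence maps an $a \times b$ rectangle onto a $b \times a$ rectangle; invariance of $M_{\mathcal{P}}$ therefore forces $a = b$, so a diagonal $l$ occurs only when $M_{\mathcal{P}}$ is a square, in which case $l$ passes through a pair of opposite corners. A horizontal or vertical reflection fixes no corner of $M_{\mathcal{P}}$ and swaps the corners in reflection-symmetric pairs. Combining this with the definition of the leading corner (the corner carrying the lexicographically largest $\alpha_i$), the only two possibilities are: (i) a unique leading corner, which must be one of the two corners fixed by a diagonal $l$; and (ii) two leading corners, which must form a reflection-symmetric pair interchanged by the automorphism associated with $l$.

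The key step is to verify that, in each case, every admissible scan used to read off $\mathcal{O}_2$ assigns the same scan-position to each node of $l$, so that all such scans induce the same order on the parking nodes of $l$. Place coordinates so that the leading corner is the origin. For case (i), the two strings emanating from the fixed corner are the row-major and column-major scans, which the transposition symmetry interchanges; a node $(i,i)$ of the diagonal $l$ is visited at position $i(n+1)$ in both scans, so the two (equal) strings list the nodes of $l$ in the identical order of increasing distance from the corner, and the arbitrary choice of string direction is immaterial. For case (ii), let $\phi$ be the reflection fixing $l$ and interchanging the two leading corners $A$ and $B = \phi(A)$. The scan issued from $A$ in a direction $d$ and the scan issued from $B$ in the mirrored direction $\phi(d)$ are carried onto one another by $\phi$; since every parking node on $l$ is a fixed point of $\phi$, a direct computation of scan-positions (e.g.\ for a vertical $l$, the middle-column node $(r,(n-1)/2)$ sits at position $rn + (n-1)/2$ whether scanned from the top-left or the top-right corner) shows that both leading corners visit the nodes of $l$ in the same order.

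I expect the main obstacle to be the bookkeeping in case (ii) when $M_{\mathcal{P}}$ is a square, because then each corner contributes the lexicographic maximum of its two strings, and one must argue that the direction achieving the maximum at $A$ is carried by $\phi$ to the direction achieving the maximum at $B$. This follows from $\phi$ being an automorphism of the configuration, so it preserves all $\alpha$-values; hence the maximizing scans at $A$ and $B$ correspond under $\phi$, and the fixed-point argument above again yields a common order on $l$. Collecting the two cases shows that $\mathcal{O}_2$ is a single, unambiguously defined ordering of the parking nodes on $l$, which completes the proof.
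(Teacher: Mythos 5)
Your proposal is correct and follows essentially the same route as the paper: the paper "proves" this observation simply by constructing $\mathcal{O}_2$ from the string representation $\alpha_i$ at the leading corner(s) and leaving the well-definedness implicit, which is exactly the ordering you analyze. What you add is the explicit verification that the arbitrary choices (which leading corner, which string direction, and the $\max$ in the square case) do not affect the induced order on the nodes of $l$, using the fact that those nodes are fixed points of the reflection; this is a sound and welcome elaboration of the paper's sketch rather than a different argument.
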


\subsection{Guard Selection and Placement $(GS)$}
 
 Consider the case when the parking nodes are symmetric, but the configuration is asymmetric. In this phase, a unique robot is selected as a guard and placed in such a way that the configuration remains asymmetric during the execution of the algorithm. The following notations are used in describing this phase:
\begin{itemize}
   
    \item Condition $C_1$: There exists at least one robot position outside the rectangle $M_{\mathcal P}$.
    \item Condition $C_2$: Each robot is inside the rectangle $M_{\mathcal P}$.
    \item Condition $C_3$: There exists a unique farthest robot from $l \cup \lbrace c \rbrace$.
\end{itemize}
Depending on the class of configurations to which $C(t)$ belongs, the phase is described in Table \ref{tab:guard}. If there are more than one furthest robot from the key corner, then since the configuration is asymmetric, a unique robot can always be selected according to the view of the robots. Note that while the guard is selected and placed, the guard is the unique farthest robot from $l \cup \lbrace c \rbrace$. As a result, it does not have any symmetric image with respect to $l \cup \lbrace c \rbrace$, which implies that the configuration remains asymmetric during the execution of the algorithm. 
\begin{table}
\centering
\begin{tabular}{ |p{3.4cm}| p{5.7cm}|p{5.7cm}|}
 \hline
 \multicolumn{3}{|c|}{\textbf{Guard Selection and Placement}} \\
 \hline
 \textbf{Initial Configuration ($\boldsymbol{\mathcal I_{21} \cup \mathcal I_{31}}$}) & \textbf{Guard} & \textbf{Position of the guard } \\
 \hline
  $ C_1 \land C_3$ & The unique robot farthest from $l \cup \lbrace c \rbrace$ & Current position of the guard\\
 \hline
  $C_1 \land \lnot C_3 $ & The unique robot furthest from $l \cup \lbrace c \rbrace$ and having the maximum configuration view among all the furthest robots & The unique robot moves towards an adjacent node away from $l \cup \lbrace c \rbrace$\\
 \hline
  $C_2 \land C_3 $ & The unique robot furthest from $l \cup \lbrace c \rbrace$ & The guard continues its movement away from $l \cup \lbrace c \rbrace$, unless the condition $C_1$ becomes true \\
 \hline 
  $C_2 \land \lnot C_3 $ & The unique robot furthest from $l \cup \lbrace c \rbrace$ and having the maximum configuration view among all the furthest robots & The guard continues its movement away from $l \cup \lbrace c \rbrace$ until the condition $C_1$ becomes true \\

 \hline
 
\end{tabular}
\caption{\label{tab:guard}Guard Selection and Placement}
\end{table}

\subsection{Half-planes and Quadrants}
First, consider the case when $C(0) \in \mathcal I_{21}$. The line of symmetry $l$ divides the entire grid into two half-planes. We consider the open half-planes, i.e., the half-planes excluding the nodes on $l$. Let $H_1$ and $H_2$ denote the two half-planes delimited by $l$. The following definitions are to be considered.
\begin{enumerate}
    \item $UP(t)$- Number of parking nodes which are unsaturated at time $t$.
    
    \item \textit{Deficit Measure of a parking node $p_i$ ($Df_{p_i}(t)$):} The deficit measure $Df_{p_i}(t)$ of a parking node $p_i$ at time $t$ is defined as the deficit in the number of robots needed to have exactly $k_i$ robots on $p_i$.
    
    \item $K_1=$ $\sum\limits_{p_i\in H_1} Df_{p_i}(t)$ denotes the total deficit in order to have exactly $\sum\limits_{p_i\in H_1} k_i$ number of robots at the parking nodes belonging to the half-plane $H_1$.
     \item $K_2=$ $\sum\limits_{p_i\in H_2} Df_{p_i}(t)$ denotes the total deficit in order to have exactly $\sum\limits_{p_i\in H_2} k_i$ number of robots at the parking nodes belonging to the half-plane $H_2$.
      \end{enumerate}
     \begin{definition}
          Let $C(t)$ be any initial configuration belonging to the set $\mathcal I_{21}$. $C(t)$ is said to be \textit{unbalanced} if the two half-planes delimited by $l$ contain an unequal number of robots. Otherwise, the configuration is said to be \textit{balanced}.
     \end{definition}
    \noindent We next consider the following conditions.
    
     \begin{enumerate}
    \item Condition $C_4$- There exists a unique half-plane that contains the minimum number of unsaturated parking nodes.
    \item Condition $C_5$- $K_1 \neq K_2$
    \item Condition $C_6$- The configuration is unbalanced.
    \item Condition $C_7$- The configuration is balanced and $\mathcal R \cap l \neq \emptyset$.
    \item Condition $C_8$- The configuration is balanced and $\mathcal R \cap l = \emptyset$. 
\end{enumerate}
The half-plane $\mathcal H_{target}$ or $\mathcal H^{+}$ is defined according to Table \ref{tab:half}, where the parking at the parking nodes initializes. The other half-plane is denoted by $\mathcal H^{-}$. In Figure \ref{halfplane1} (a), $ABCD$ is the $M_{\mathcal P}$ and $AB'C'D$ is the $MER$. Assume that the capacities of the parking nodes $p_1$, $p_2$, $p_3$ and $p_4$ are 2, 2, 1 and 1, respectively. The half-plane with more number of robots is selected as $\mathcal H^{+}$. In Figure \ref{halfplane1} (b), assume that the capacities of the parking nodes $p_1$, $p_2$, $p_3$ and $p_4$ are 3, 3, 2 and 2, respectively. Each of the half-planes contains the same number of robots. Therefore, the configuration is balanced. The half-plane not containing the guard $r_5$ is defined as $\mathcal H^{+}$. 
\begin{table}
\centering
\begin{tabular}{ |p{5cm}|p{9.8cm}|  }
\hline
\multicolumn{2}{|c|}{\textbf{Demarcation of the half-planes for fixing the target}} \\
\hline
\multicolumn{1}{|c|}{\textbf{Initial Configuration} $\boldsymbol{(\mathcal I_{21}}$)} & \multicolumn{1}{|c|}{$\boldsymbol{\mathcal H^{+}}$}\\ 
\hline
{\small $C_4$ }  & T{\small he unique half-plane which contains the minimum number of unsaturated parked nodes} \\
\hline
{\small $\lnot$ $C_4$ $\land$ $C_5$ $\land$ $K_1< K_2$} & {\small $H_1$}\\ 
\hline
{\small $\lnot$ $C_4$ $\land$ $C_5$ $\land$ $K_2< K_1$} & {\small $H_2$}\\
\hline
{\small $\lnot$ $C_4$ $\land$ $\lnot C_5$ $\land$ $C_6$} & {\small The unique half-plane with the maximum number of robot positions}\\
\hline
{\small $\lnot$ $C_4$ $\land$ $\lnot C_5$ $\land$ $\lnot$ $C_6$ $\land$ $C_7$} & {\small The northernmost robot on $l$ move towards an adjacent node away from $l$. The unique half-plane with the maximum number of robot positions is defined as $\mathcal H^{+}$}\\
\hline
{\small $\lnot$ $C_4$ $\land$ $\lnot C_5$ $\land$ $\lnot$ $C_6$ $\land$ $\lnot$ $C_7 \land C_8 $} & {\small The unique half-plane not containing the guard}\\
\hline
\end{tabular}
\caption{\label{tab:half}Demarcation of the half-planes}
\end{table}%

\begin{figure}[h]
			\centering
				\subfloat[]
			{
				\includegraphics[width=0.37\columnwidth]{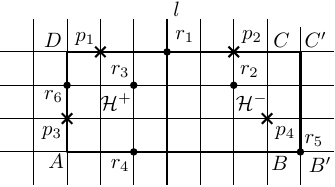}
			}
   \hspace{0.5cm}
   \subfloat[]
			{
				\includegraphics[width=0.37\columnwidth]{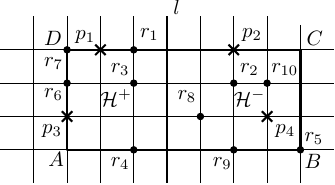}
			}
			
			\caption{Example configuration showing demarcations of half-planes.} 
			\label{halfplane1}
			\end{figure}

\noindent Next, consider the lines $l$ and $l'$ that pass through the center of $MER$. These lines divide the entire rectangle into four quadrants. Let $Q_i$ be the quadrants that are defined by the lines $l$ and $l'$, where $i$ ranges from 1 to 4. We consider the open quadrants, i.e., the quadrants exclude the nodes belonging to $l$ and $l'$. Let $L_j=$ $\sum\limits_{p_i\in Q_j} Df_{p_i}(t)$ denote the total deficit in order to have exactly $\sum\limits_{p_i\in Q_j} k_i$ number of robots at the parking nodes belonging to the quadrant $Q_j$, where $j$ ranges from 1 to 4.

 \begin{table}
 \centering
 \begin{tabular}{ |p{5.2cm}|p{9.9cm}|  }
 \hline
 \multicolumn{2}{|c|}{\textbf{Demarcation of the quadrants for fixing the target}} \\
 \hline
 \multicolumn{1}{|c|}{\textbf{Initial Configuration} $\boldsymbol{C(0)}$} & \multicolumn{1}{|c|}{$\boldsymbol{\mathcal Q_{target}}$}\\ 
 \hline
 {\small $C_9$ }  & T{\small he unique quadrant which contains the minimum number of unsaturated parking nodes} \\
 \hline
 {\small $\lnot$ $C_9$ $\land$ $C_{10}$ $\land$ $L_j$ is minimum} & {\small $Q_j$}\\ 
 \hline

 {\small $\lnot$ $C_9$ $\land$ $\lnot C_{10}$ $\land$ $C_{11}$} & {\small The unique quadrant with the maximum number of robot}\\
 \hline
 {\small $\lnot$ $C_9$ $\land$ $\lnot C_{10}$ $\land$ $\lnot$ $C_{11}$ $\land$ $C_{12}$} & {\small The robot on $l$ or $l'$ with the highest order moves along $l \cup l'$ and when it is one move away from a quadrant containing a maximum number of robot positions, it moves towards an adjacent node away from $l \cup l'$. The unique quadrant with the maximum number of robot is $\mathcal Q_{target}$}\\
 \hline
 {\small $\lnot$ $C_9$ $\land$ $\lnot C_{10}$ $\land$ $\lnot$ $C_{11}$ $\land$ $\lnot$ $C_{12} \land C_{13} $} & {\small In case there are two adjacent quadrants containing the minimum $L_j$ value and the maximum number of robot positions, the unique quadrant that does not contain the guard. If the quadrants containing the minimum $L_j$ value and the maximum number of robot positions are such that they are non-adjacent, then the quadrant that is not adjacent to the quadrant containing the guard is $\mathcal Q_{target}$ }\\
 \hline
 \end{tabular}
 \caption{\label{tab:quadrant1}Demarcation of the quadrants}
 \end{table}
 \begin{definition} \label {def1}
      Let $C(t)$ be any initial configuration belonging to the set $\mathcal I_{31}$. Consider the quadrants with the minimum $L_j$ value. $C(t)$ is said to be \textit{unbalanced} if there exists a unique quadrant with minimum $L_j$ value and such that it contains the maximum number of robots. Otherwise, the configuration is said to be \textit{balanced}.   
     \end{definition} 
     \noindent According to the Definition \ref{def1}, a configuration is said to be \textit{balanced} if it contains at least two quadrants with minimum $L_j$ value and with the maximum number of robot positions. We next consider the following conditions.
     \begin{enumerate}
     \item Condition $C_9$- There exists a unique quadrant that contains the minimum number of unsaturated nodes.
     \item Condition $C_{10}$- There exists a unique quadrant with the minimum $L_j$ value, for $j=\lbrace 1, 2, 3, 4 \rbrace$.
     \item Condition $C_{11}$- The configuration is unbalanced.
     \item Condition $C_{12}$- The configuration is balanced and ($\mathcal R \cap l \neq \emptyset$ $\lor$ $\mathcal R \cap l' \neq \emptyset$).
    \item Condition $C_{13}$- The configuration is balanced and $\mathcal R \cap l = \emptyset$, $\mathcal R \cap l' \neq \emptyset$. 
     
 \end{enumerate}
 Note that $\lnot C_{10}$ implies there exist at least two quadrants with the minimum $L_j$ value. The quadrant $\mathcal Q_{target}$ or $\mathcal Q^{++}$, where the parking is initialized, is defined according to Table \ref{tab:quadrant1}. The quadrants adjacent to $\mathcal Q^{++}$ with respect to $l$ and $l'$ are defined as $\mathcal Q^{-+}$ and $\mathcal Q^{+-}$ respectively. Similarly, the quadrant non-adjacent to $\mathcal Q^{++}$ is defined as $\mathcal Q^{--}$. In Figure \ref{quadrant1}, assume that the capacities of each of the parking nodes are 2. As a result, each $L_j$ equals 2 in the initial configuration. The quadrant with more number of robots is defined as $\mathcal Q^{++}$. Consider the case when there are exactly two adjacent quadrants with the minimum $L_j$ value and with the maximum number of robots. Suppose one of such quadrants contains the guard. In that case, the quadrant not containing the guard is defined as $\mathcal Q^{++}$.

\begin{definition}
Consider the half-lines starting from $c$ and along the grid-lines. We denote the wedge boundaries of the quadrants delimited by the lines $l$ and $l'$ by $\mathcal B_i$, where $i=\lbrace 1, 2, 3, 4 \rbrace$.
\end{definition}

\begin{figure}[h]
			\centering
				
			{
				\includegraphics[width=0.320\columnwidth]{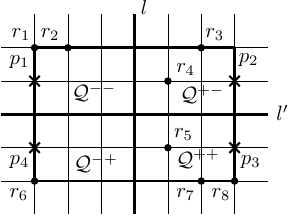}
			}

			\caption{Example configuration showing demarcations of quadrants.} 
			\label{quadrant1}
			\end{figure}
 
 \subsection{Target Parking Node Selection (TPS)}
In this phase, the target parking node for the parking problem is selected. Depending on the following classes of configurations, the phase is described in Table \ref{tab:target}. Let $p_{guard}$ be the closest parking node from the guard. If multiple such parking nodes exist, the parking node closest to the guard and having maximum order is selected as $p_{guard}$. We first assume that the target parking nodes are selected in $\mathcal P \setminus \lbrace p_{guard} \rbrace$. While all the parking nodes belonging to the set $\mathcal P \setminus \lbrace p_{guard} \rbrace$ become saturated, $p_{guard}$ becomes the target parking node. We next consider the following conditions that are relevant in understanding this phase. 
\begin{enumerate}

\item $C_{14}$- There exists an unsaturated parking node on $l$.
\item $C_{15}$- There exists an unsaturated parking node on $c$.
 \item $C_{16}$- All the parking nodes belonging to $\mathcal H^{+}$ are saturated.
 \item $C_{16}'$- All the parking nodes belonging to $\mathcal H^{-}$ are saturated.
 
      \item $C_{17}$- All the parking nodes belonging to $\mathcal Q^{++}$ are saturated. 
  \item $C_{18}$- All the parking nodes at the wedge boundaries corresponding to the quadrant $\mathcal Q^{++}$ are saturated.
  \item $C_{19}$- All the parking nodes belonging to the quadrant $\mathcal Q^{-+}$ are saturated.

 \item $C_{20}$- All the parking nodes at the wedge boundaries corresponding to the quadrant $\mathcal Q^{-+}$ are saturated. 
  \item $C_{21}$- All the parking nodes belonging to the quadrant $\mathcal Q^{+-}$ are saturated.
 \item $C_{22}$- All the parking nodes at the wedge boundaries corresponding to the quadrant $\mathcal Q^{+-}$ are saturated. 
 \item $C_{23}$- All the parking nodes belonging to $\mathcal Q^{--}$ are saturated.
\end{enumerate}
Note that $\lnot C_{14}$ implies that the parking nodes are symmetric with respect to $l$ and there either does not exist any parking node on $l$ or each parking node on $l$ is saturated. In Figure \ref{halfplane1}, $A$ and $B$ are the leading corners. $p_1$ is the parking node in $\mathcal H^{+}$ which has the highest order. The target parking nodes are selected in the order ($p_1$, $p_3$, $p_2$, $p_4$).

\begin{table}
\centering
\begin{tabular}{ |p{4.6cm}|p{9.9cm}|  }
\hline
\multicolumn{2}{|c|}{\textbf{Target Parking Node Selection}} \\
\hline
\multicolumn{1}{|c|}{\textbf{Initial Configuration} $\boldsymbol{C(0)}$} & \multicolumn{1}{|c|}{\textbf{Target Parking Node}}\\ 
\hline
{\small $\mathcal I_1$ }  & T{\small he parking node which is unsaturated and has the highest order with respect to $\mathcal O_1$ } \\
\hline
{\small $\mathcal I_{2}$ $\land$ $C_{14}$} & {The parking node on $l$ which is unsaturated and has the highest order with respect to $\mathcal O_2$}\\ 
\hline

{\small $\mathcal I_{21}$ $\land$ $\lnot$ $C_{14}$ $\land$ $\lnot$ $C_{16}$} & {\small The parking node, which is unsaturated and has the highest order in $\mathcal H^{+}$ among all the unsaturated nodes in $\mathcal H^{+}$}\\
\hline
{\small $\mathcal I_{21}$ $\land$ $\lnot$ $C_{14}$ $\land$ $C_{16}$ $\land$ $\lnot C_{16}'$} & {\small The parking node, which is unsaturated and has the highest order in $\mathcal H^{-}$ among all the unsaturated nodes in $\mathcal H^{-}$}\\
\hline

{\small $\mathcal I_{22}$ $\land$ $\lnot$ $C_{14}$} & {\small The two parking nodes that have the highest order among all the unsaturated parking nodes and lying on two different half-planes }\\
\hline
{$\mathcal I_3$ $\land$ $C_{15}$}& {\small The parking node on $c$}\\
 \hline
 {$\mathcal I_{31}$ $\land  \lnot C_{15}$ $\land \lnot C_{17}$}& {\small The parking node in $\mathcal Q^{++}$ which is unsaturated and has the highest order among all the unsaturated parking nodes in $\mathcal Q^{++}$} \\
 \hline
 {$\mathcal I_{31}$ $\land  \lnot C_{15}$ $\land C_{17}$ $\land$ $\lnot C_{18}$}& {\small The parking node at the wedge boundaries corresponding to $\mathcal Q^{++}$ which is unsaturated and has the highest order among such nodes} \\

 \hline
 {$\mathcal I_{31}$ $\land  \lnot C_{15}$ $\land  C_{17} \land C_{18 } \land \lnot C_{19} $}& {\small The parking node in $\mathcal Q^{-+}$ which is unsaturated and has the highest order among all the unsaturated parking nodes in $\mathcal Q^{-+}$} \\
 \hline 
 { \small $\mathcal I_{31}$ $\land  \lnot C_{15}$ $\land  C_{17} \land C_{18 } \land C_{19} \land \lnot C_{20}$}& {\small The parking node at the wedge boundaries corresponding to $\mathcal Q^{-+}$ which is unsaturated and has the highest order among such unsaturated parking nodes} \\
 \hline
 { \small $\mathcal I_{31}$ $\land  \lnot C_{15}$ $\land  C_{17} \land C_{18 } \land C_{19} \land  C_{20} \land \lnot C_{21}$}& {\small The parking node in $\mathcal Q^{+-}$ which is unsaturated and has the highest order among all the unsaturated parking nodes in $\mathcal Q^{+-}$} \\
 \hline
 { \small $\mathcal I_{31}$ $\land  \lnot C_{15}$ $\land  C_{17} \land C_{18 } \land C_{19} \land  C_{20} \land C_{21} \land \lnot C_{22}$}& {\small The parking node at the wedge boundaries corresponding to $\mathcal Q^{+-}$ which is unsaturated and has the highest order among such unsaturated parking nodes} \\
 \hline
 { \small $\mathcal I_{31}$ $\land  \lnot C_{15}$ $\land  C_{17} \land C_{18 } \land C_{19} \land  C_{20} \land C_{21} \land C_{22} \land \lnot C_{23}$}& {\small The parking node in $\mathcal Q^{--}$ which is unsaturated and has the highest order among all the unsaturated parking nodes in $\mathcal Q^{--}$} \\
 
 \hline
 {\small $ \mathcal I_{32}$ $\land  \lnot C_{15}$} &{\small The two or four parking nodes which have the highest order among all the unsaturated nodes and lying on different quadrants or on the wedge boundaries }\\
 \hline
\end{tabular}
\caption{\label{tab:target}Target Parking Node Selection}
\end{table}
\subsection{Candidate Robot Selection Phase}
In view of Lemma \ref{chap4-lemma1}, while a robot moves towards a parking node, it must ensure collision-free movement. Otherwise, the problem becomes unsolvable. As a result, a robot will move toward its target only when it has a path toward that target that does not contain any other robot positions. 
\begin{definition}
A path from a robot to a parking node is said to be \textit{free} if it does not contain any other robot positions.
\end{definition}

\noindent A robot would move toward its target only when it has a free path toward it. In this phase, the \textit{candidate robot} is selected and allowed to move toward the target parking node. Let $p \neq p_{guard}$ be the target parking node selected in the TPS phase. Depending on the different classes of configurations, the following cases are to be considered.

\begin{enumerate}
    \item $C(t)$ is asymmetric. As a result, the robots are orderable. The robot that does not lie on any saturated parking node and has the shortest free path to $p$ is selected as the candidate robot. If multiple such robots exist, the one with the highest order among such robots is selected as the candidate robot. 

    \item $C(t)$ is symmetric with respect to a single line of symmetry $l$. First, assume that $p$ is on $l$. If at least one robot exists on $l$, then the symmetry can be broken by allowing a robot from $l$ to move towards an adjacent node away from $l$. As a result, assume that there is no robot position on $l$. The two closest robots, which do not lie on any saturated parking node and have shortest free paths towards $p$, are selected as the candidates for $p$. If there are multiple such robots, the ties are broken by considering the robots that lie on different half-planes and have the highest order among all such robots. Next, assume that $p$ is on the half-planes. The robot that does not lie on any saturated parking node and has a shortest free path toward $p$ is selected as the candidate robot. Note that such candidates are selected in both half-planes.
     \item $C(t)$ is symmetric with respect to rotational symmetry. First, assume that $p$ is on $c$. If there exists a robot on $c$, the robot on $c$ moves towards an adjacent node, and the configuration becomes asymmetric. Assume the case when there are no robots on $c$. The robots that are closest to $p$ are selected as candidate robots. In this case, depending on whether the angle of rotational symmetry is $180 ^{\circ}$ or $90 ^{\circ} $, two or four robots are selected as candidates. Next, assume that $p$ is located either on a quadrant or on of the wedge boundaries. If the target parking node lies on a quadrant, the robot that does not lie on any saturated parking node and has a shortest free path toward $p$ is selected as the candidate robot. It should be noted that such candidates are chosen from each of the four quadrants, for each target parking node. Otherwise, if the target parking node is on a wedge boundary, the robot(s) not lying on any saturated parking node and having a shortest free path towards the target is (are) selected as candidate robot(s). 
\end{enumerate}

\noindent Next, assume that $p_{guard}$ is the target parking node. The candidates are selected as the robot which has shortest free path towards $p_{guard}$. Finally, the guard moves towards $p_{guard}$. By the choice of $p$, there always exists a half-line starting from $p$, which does not contain any robot position. As a result, a free path always exists between the candidate robot and $p$.

\subsection{Guard Movement}
Assume the case when the parking nodes are symmetric and the configuration is asymmetric. In the GM phase, the guard moves toward its respective destination and the parking process is terminated. The guard moves only when it finds that, except for one, all the parking nodes have become saturated. It moves towards its destination $p$ in a free path. The guard moves towards its destination and each parking node becomes saturated, transforming the configuration into a final configuration.
\subsection{Symmetry Breaking Phase}
In this phase, the symmetric configurations that can be transformed into asymmetric configurations are considered. The list of all configurations that are considered in this phase is as follows:
\begin{enumerate}
    \item Configurations admitting a single line of symmetry $l$ with at least one robot position on $l$, i.e., $C(0) \in \mathcal I_{221}$.
    \item Configurations admitting rotational symmetry with a robot on the center of rotational symmetry, i.e., $C(0) \in \mathcal I_{331}$.
\end{enumerate}
Let $r$ be the robot on $l$ with the maximum order, in case $C(0) \in \mathcal I_{221}$, i.e., $r$ appears after every other robot that is on $l$ in the string directions associated to the leading corner(s). In case, $C(0) \in \mathcal I_{331}$, let $r$ be the robot on $c$. It should be noted that while the robot $r$ on $l \cup \lbrace c \rbrace$ moves towards an adjacent node away from $l \cup \lbrace c\rbrace$, the configuration transforms into an asymmetric configuration. However, it might be possible that the neighbors of $r$ contain robot positions. The movement of the robot towards an adjacent node might create a robot multiplicity node, resulting in an unsolvable configuration according to Lemma \ref{chap4-lemma1}. As a result, there must be a free space available around $r$, so that it can move toward an adjacent node. Due to the asynchronous behavior of the scheduler, there might be a possible pending move while the adjacent robots of $r$ move towards an adjacent node away from $l \cup \lbrace c \rbrace$. Therefore, we consider the following definitions.
\begin{definition}
\noindent A function $h_t:V\rightarrow\lbrace 0, 1 \rbrace$ at any fixed time $t\geq 0$ is defined by:
 \[ h_t(v)=\begin{cases} 
 0 & \text{if} \;v \textrm {  is a free node}  \\
 1 & \text{if} \;v \textrm{ contains a robot position } 
 \end{cases}
 \]
\end{definition}

\noindent Note that $h_t(v)$ is defined as an indicator variable on the set of nodes $V$ which equals 1 when the node contains a robot's position at time $t$.
\begin{definition}
	Let $u$ be a node of the input graph containing a robot position $r$. Consider all the half-lines starting from the node $u$ containing the robot position $r$. Scan all those half-lines and associate $h_t(v)$ to each node $v$, that the half-line encounters. A string terminates when the last node occupied by a robot in the half-line is encountered. The four strings that are generated by the robot $r$ (\cite{DBLP:conf/ictcs/Cicerone20}) are denoted by $\beta_{left}(r)$, $\beta_{right}(r)$, $\beta_{top}(r)$ and $\beta_{bottom}(r)$.
\end{definition}
Next, assume that the initial configuration $C(0)$ is symmetric with respect to rotational symmetry. Let $r$ be the robot at the center of rotation $c$. Then, we have the following definition.
\begin{definition}
	 Consider all the strings generated by the robot $r$. The strings are said to be \textit{nearly equal} if the adjacent nodes of $c$ are occupied by the robots and can be obtained from one another by reversing one occurrence of the substring $01$. That is, the strings can be made equal by moving a robot towards an adjacent node whose movement is pending. A configuration is said to be \textit{nearly rotational}, if the strings generated by the central robot $r$ are \textit{nearly equal}.
\end{definition}
Next, assume that the initial configuration $C(0)$ is symmetric with respect to a unique line of symmetry $l$. Let $r$ be the unique robot on $l$ that has the maximum configuration view.
\begin{definition}
	 Consider the strings $\beta_{left}(r)$ and $\beta_{right}(r)$ as the strings generated by $r$ and which terminate away from $l$. The strings $\beta_{left}(r)$ and $\beta_{right}(r)$ are said to be \textit{nearly equal}, if the strings  $\beta_{left}(r)$ and $\beta_{right}(r)$ can be obtained from one another by just reversing one occurrence of the substring $01$. A configuration is said to be \textit{nearly reflective}, if the strings $\beta_{left}(r)$ and $\beta_{right}(r)$ are \textit{nearly equal}.
\end{definition}
\begin{figure}[h]
	\centering
	\hspace{0.1cm}
{
		\includegraphics[width=0.20\columnwidth]{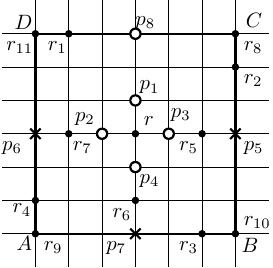}
	}
	\hspace*{0.65cm}
{
		\includegraphics[width=0.24\columnwidth]{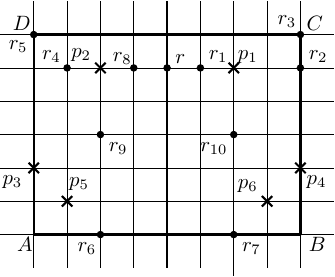}
	}
	\caption{(a) \textit{Nearly rotational configuration}. (b) \textit{Nearly reflective configuration}. }
	\label{strings}
\end{figure}
\noindent In Figure~\ref{strings}(a), the \textit{strings generated by the robot $r$} are given by $\lbrace 110,110,110,101\rbrace$. The circles at $p_1$, $p_2$, $p_3$, $p_4$ and $p_8$ denotes parking nodes with a robot position on it. In Figure~\ref{strings}(b), the strings generated by the robot $r$ and which terminate away from $l$ are given by $\lbrace 1010, 1001\rbrace$. 
\section{Correctness} \label{chap6:sec6}
\begin{lemma}
    In the GS phase, the guard remains invariant while it moves towards its destination.
\end{lemma}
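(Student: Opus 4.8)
The plan is to show that the physical robot designated as the guard at the start of the GS phase continues to satisfy the selection rule of Table~\ref{tab:guard} at every intermediate configuration, so that any robot activating during the phase recomputes the \emph{same} guard. The key structural fact is that in every row of Table~\ref{tab:guard} the prescribed motion of the guard is a single step \emph{away} from $l \cup \lbrace c \rbrace$, which strictly increases its distance from $l \cup \lbrace c \rbrace$ by one unit, while no other robot is meant to move during this phase.

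First I would record the invariant I intend to maintain: at every configuration reached during the GS phase, the guard is the unique robot at maximum distance from $l \cup \lbrace c \rbrace$. I would verify this for the four cases of Table~\ref{tab:guard}. In the case $C_1 \wedge C_3$ the guard does not move and is already the unique farthest robot, so there is nothing to prove. In the cases $C_1 \wedge \lnot C_3$ and $C_2 \wedge \lnot C_3$, the guard is the farthest robot with maximum configuration view among the farthest robots; after its first step away from $l \cup \lbrace c \rbrace$ its distance becomes strictly larger than that of every other (stationary) robot, so the invariant is established and condition $C_3$ becomes true. In the case $C_2 \wedge C_3$ the invariant already holds, and each successive step away from $l \cup \lbrace c \rbrace$ preserves it, the phase terminating when the guard leaves $M_{\mathcal P}$ and $C_1$ becomes true.

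Given the invariant, I would conclude invariance of the guard as follows. Since only the guard changes position, the distances of all other robots from $l \cup \lbrace c \rbrace$ are unchanged; combined with the strictly increasing distance of the guard, the guard retains a \emph{strict} maximum distance, and hence is re-selected by the rule of Table~\ref{tab:guard} at every snapshot. Moreover, as observed just after Table~\ref{tab:guard}, a unique farthest robot has no symmetric image with respect to $l \cup \lbrace c \rbrace$, so $C(t)$ stays asymmetric throughout the phase; by the ordering induced on an asymmetric configuration, the configuration view is well defined and the selection is unambiguous even before the guard's first step, when several robots tie at the maximum distance.

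The main obstacle is the asynchronous scheduler. I would argue that any non-guard robot activated during the GS phase, computing on the current (asymmetric) snapshot, identifies the same robot as the guard via the rule of Table~\ref{tab:guard} and, finding itself different from the guard, decides not to move; hence no pending or interleaved activation of another robot can alter the distances or the selection. The only delicate point is the transition across the first step in the $\lnot C_3$ cases, where the maximum-distance set is not yet a singleton: here consistency follows from the uniqueness of the maximum-configuration-view tie-break guaranteed by asymmetry, after which the maximum distance is attained uniquely and the identification of the guard becomes immediate.
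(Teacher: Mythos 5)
Your proposal is correct and follows essentially the same route as the paper's proof: both arguments reduce to the observation that after the guard's first step away from $l \cup \lbrace c \rbrace$ it becomes the unique farthest robot (with the configuration-view tie-break, justified by asymmetry, handling the initial multi-way tie), and hence is re-identified by every subsequent snapshot. Your case split by the four rows of the table is just a refinement of the paper's two cases (robot outside versus inside $M_{\mathcal P}$), and your explicit handling of the asynchronous scheduler is a welcome but inessential elaboration of the same idea.
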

\begin{proof}
	 	First, assume that the parking nodes are symmetric with respect to $l$ and there do not exist any parking nodes on $l$. The proof follows similarly when the configuration admits rotational symmetry, and there does not exist any parking node on $c$. The following cases are to be considered.
	 	\setcounter{case}{0}
	 	\begin{case} \label{case111}
	 	\normalfont
	 	There exists at least one robot position outside the rectangle $M_{\mathcal P}$. Note that there may be multiple such robots. If there exists precisely one such robot $r$, then in the GS phase, $r$ is selected as a guard. Otherwise, if there are multiple such robots, the unique robot $r$ with the maximum configuration view among such robots is selected as the guard. $r$ moves towards an adjacent node $v$ away from $l$. The moment it reaches $v$, it becomes the unique farthest robot from $l$. As the guard is selected as the unique farthest robot from $l$, it remains uniquely identifiable by the other robots.  
	 	\end{case}
	 	\begin{case}
	 	\normalfont
	 	Each robot position is inside or on the rectangle $M_{\mathcal P}$. In that case, the robot position farthest from $l$ is selected as a guard. Note that there may be multiple such robots. The GS phase ensures that the guard is selected as the unique robot $r$, which is farthest from $l$ and with the maximum configuration view in case of a tie. The moment $r$ moves towards an adjacent node away from $l$, it becomes the unique farthest robot from $l$. $r$ continues its movement unless it becomes the unique robot outside the rectangle $M_{\mathcal P}$. The rest of the proof follows from Case \ref{case111}.
	 	\end{case}
\end{proof}
\begin{lemma}
    During the execution of the algorithm Parking(), if the parking nodes admit a single line of symmetry $l$, then $\mathcal H^{+}$ remains invariant. \label{chap6-lemma2}
\end{lemma}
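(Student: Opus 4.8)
The plan is to argue by a case analysis over the six rows of Table~\ref{tab:half}, which is precisely the rule assigning $\mathcal{H}^{+}$ as a function of the current configuration $C(t)$. Since the parking nodes are fixed and the configuration stays asymmetric throughout (the guard selected in the GS phase has no symmetric image with respect to $l$, and by the preceding lemma it remains the unique farthest robot from $l$), the line $l$ and hence the two open half-planes $H_1,H_2$ form a fixed frame of reference for all $t$. It therefore suffices to show that whichever branch of Table~\ref{tab:half} fires at time $t$ always returns the same half-plane, even though the active branch may change as the execution proceeds.

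The crux is a small collection of monotonicity facts that follow from the order in which the TPS phase saturates parking nodes. By Table~\ref{tab:target}, while $\mathcal{H}^{+}$ is not fully saturated every target parking node lies in $\mathcal{H}^{+}$, and only once $C_{16}$ holds does a target appear in $\mathcal{H}^{-}$. Consequently, as long as $\mathcal{H}^{+}$ is being filled: (i) no parking node of $\mathcal{H}^{-}$ is saturated, so its count of unsaturated parking nodes and its total deficit stay fixed; (ii) the count of unsaturated nodes of $\mathcal{H}^{+}$ and its deficit are non-increasing; and (iii) candidate robots move only toward targets in $\mathcal{H}^{+}$, so no robot leaves $\mathcal{H}^{+}$ for $\mathcal{H}^{-}$, whence the robot count of $\mathcal{H}^{+}$ is non-decreasing (any robot drawn across $l$ from $\mathcal{H}^{-}$ only reinforces this). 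Each relevant quantity thus moves in the direction that \emph{reinforces} the choice of $\mathcal{H}^{+}$ and never that of the opposite half-plane.

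With these facts each row is immediate, and I would present them in order of priority. If $C_4$ fired initially, $\mathcal{H}^{+}$ is the unique half-plane of minimum unsaturated parking nodes; by (i)--(ii) that count only drops for $\mathcal{H}^{+}$ and is fixed for $\mathcal{H}^{-}$, so $\mathcal{H}^{+}$ stays the strict minimum. If the $C_5$ branch fired, then $\lnot C_4$ forces equal unsaturated counts while $K_1<K_2$ (say), so $\mathcal{H}^{+}=H_1$; by (i)--(ii), $K_1$ only decreases and $K_2$ is fixed, so $K_1<K_2$ persists, and the instant $C_4$ becomes true it too selects $H_1$, since only $H_1$ has lost unsaturated nodes. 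The $C_6$ branch ($\mathcal{H}^{+}$ the strict-majority half-plane) is preserved by (iii), and once $K_1$ drops below $K_2$ the higher-priority $C_5$ branch reselects the same $H_1$. Hence every branch transition is ``upward'' in priority and re-confirms the same half-plane.

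The remaining obstacle is the two balanced cases. For $C_8$ (balanced, $\mathcal{R}\cap l=\emptyset$), $\mathcal{H}^{+}$ is the half-plane \emph{not} containing the guard; the preceding lemma guarantees the guard keeps its identity and, moving away from $l$, never crosses it, so the non-guard half-plane is well defined for all $t$, and as $\mathcal{H}^{+}$ fills the higher branches ($C_6$, then $C_5$, then $C_4$) all re-select it. For $C_7$ (balanced, $\mathcal{R}\cap l\neq\emptyset$) the delicate point is the transient in which the northernmost robot on $l$ steps off $l$: since the configuration is asymmetric this move is deterministic, and the moment that robot leaves $l$ the configuration becomes unbalanced and reduces to the already-handled $C_6$ case, with $\mathcal{H}^{+}$ equal to the half-plane the robot entered. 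Handling this transition cleanly---in particular checking that a candidate's free path touching $l$ momentarily cannot flip the majority count, and that asynchronous pending moves never produce an intermediate state selecting the wrong half-plane---is the part requiring the most care; everything else follows directly from the monotonicity facts (i)--(iii).
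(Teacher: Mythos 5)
Your proposal is correct and follows essentially the same route as the paper's proof: a case analysis over the rows of Table~\ref{tab:half}, using the facts that saturated parking nodes never become unsaturated, that deficits decrease only in $\mathcal H^{+}$ while targets are selected there, that robots do not cross $l$ until $\mathcal H^{+}$ is saturated, and that each case eventually transitions to a higher-priority case that re-selects the same half-plane. Your explicit statement of the monotonicity facts (i)--(iii) is a slightly cleaner packaging of what the paper uses implicitly, but the argument is the same.
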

\begin{proof}
    The following cases are to be considered.
   \setcounter{case}{0}
        \begin{case} \label{proof1-case1}
        \normalfont
            Condition $C_4$ holds. This implies that there exists a unique half-plane $\mathcal H^{+}$, which contains the minimum number of unsaturated parking nodes. As a result, there exists at least one saturated parking node in $\mathcal H^{+}$. The target parking node is selected in $\mathcal H^{+}$ in the TPS phase, which has the highest order among all unsaturated parking nodes in $\mathcal H^{+}$. During the execution of the algorithm, no robot at the saturated parking nodes is allowed to move. As a result, a saturated parking node can never become unsaturated, and the half-plane with the minimum number of unsaturated parking nodes remains invariant. Hence, $\mathcal H^{+}$ remains invariant. 
\end{case}
\begin{case} \label{proof1-case2}
\normalfont
    Condition $ \lnot C_4 \land C_5$ holds. This implies that the number of unsaturated parking nodes is the same in both the half-planes. However, $K_1 \neq K_2$. Assume that $K_1 <K_2$. The target parking node is selected in $\mathcal H^{+}$ with the minimum $K_i$ value, $i \in \lbrace 1,2\rbrace$, i.e., the half-plane, which has the minimum total deficit measure. Suppose at time $t>0$, a candidate robot is selected in the CR phase and allowed to move towards the target parking node. While this robot reaches the target parking node, the value of $K_1$ becomes much less than $K_2$. Eventually, there exists a time $t'> t$ in which at least one parking node in $\mathcal H^{+}$ becomes saturated. The rest of the proof follows from Case \ref{proof1-case1}.  
\end{case}
\begin{case} \label{proof1-case3}
\normalfont
    Condition $\lnot$ $C_4$ $\land$ $\lnot C_5$ $\land$ $C_6$ holds. This implies that $K_1=K_2$ and the configuration is unbalanced. $\mathcal H^{+}$ is selected as the half-plane with the maximum number of robots. The symmetry of the parking nodes is also defined with respect to their capacities. As a result, according to the execution of the algorithm, a robot in $\mathcal H^{+}$ will only move to the half-plane $\mathcal H^{-}$ when all of the parking nodes in $\mathcal H^{+}$ become saturated. There will eventually be a time $t>0$, when a robot reaches the target parking node, resulting in the value of $K_i$ being less than the value of $K_j$, $i, j= \lbrace 1, 2 \rbrace$. The rest of the proof follows from Case \ref{proof1-case2}.
\end{case}
\begin{case}
\normalfont
    Condition $\lnot$ $C_4$ $\land$ $\lnot C_5$ $\land$ $\lnot$ $C_6$ $\land$ $C_7$ holds. This implies the configuration is balanced and at least one robot position exists on $l$. The robot on $l$ with the maximum configuration view moves towards an adjacent node and the configuration becomes unbalanced, resulting in the condition $C_6$ evaluating to true. The rest of the proof follows from Case \ref{proof1-case3}. 
\end{case}
\begin{case}
\normalfont
    Condition $\lnot$ $C_4$ $\land$ $\lnot C_5$ $\land$ $\lnot$ $C_6$ $\land$ $\lnot C_7$ $\land C_8$ holds. This implies the configuration is balanced and there do not exist any robot positions on $l$. $\mathcal H^{+}$ is selected as the half-plane that does not contain the guard. The guard moves only when each parking node except for one becomes saturated. As a result, the target parking node will be selected in $\mathcal H^{+}$, unless each parking node in $\mathcal H^{+}$ becomes saturated. As a result, eventually at some time the condition $C_5$ will hold true and $\mathcal H^{+}$ will remain invariant from that instant of time.
\end{case}

\end{proof}
\begin{lemma}
  During the execution of the algorithm Parking(), if the parking nodes admit rotational symmetry, then $\mathcal Q^{++}$ remains invariant. \label{chap6-lemma3}
\end{lemma}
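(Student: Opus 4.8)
The plan is to mirror the argument of Lemma \ref{chap6-lemma2} for the two perpendicular axes $l$ and $l'$ and the four quadrants they induce, carrying out a case analysis driven by the demarcation conditions $C_9$ through $C_{13}$ of Table \ref{tab:quadrant1}. The single invariant that powers every case is the monotonicity fact already used in the half-plane proof: the algorithm never moves a robot that lies on a saturated parking node, so a saturated parking node can never become unsaturated. Consequently, within each quadrant the number of unsaturated parking nodes is non-increasing, and---because the TPS phase (Table \ref{tab:target}) saturates the parking nodes of $\mathcal Q^{++}$ and of its wedge boundaries before any node of the remaining quadrants is targeted---the quantity $L_{\mathcal Q^{++}}$ is non-increasing while every other $L_j$ stays momentarily unchanged during the filling of $\mathcal Q^{++}$.

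I would first dispose of the base case $C_9$: there is a unique quadrant containing the minimum number of unsaturated parking nodes, and $\mathcal Q^{++}$ is defined as this quadrant. Since no quadrant can gain unsaturated parking nodes and $\mathcal Q^{++}$ already contains at least one saturated node, the uniqueness of the minimiser is preserved, so $\mathcal Q^{++}$ stays fixed. The remaining cases then reduce to this one exactly as in Lemma \ref{chap6-lemma2}. When $\lnot C_9 \land C_{10}$ holds, all four quadrants carry the same count of unsaturated parking nodes but $\mathcal Q^{++}$ is the unique minimiser of $L_j$; as candidate robots are routed into $\mathcal Q^{++}$ its deficit $L_{\mathcal Q^{++}}$ strictly decreases, so it remains the unique minimum, and as soon as one of its parking nodes saturates, the count of unsaturated nodes in $\mathcal Q^{++}$ drops strictly below the others, whence $C_9$ becomes true. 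When $\lnot C_9 \land \lnot C_{10} \land C_{11}$ holds, the configuration is unbalanced in the sense of Definition \ref{def1}, so $\mathcal Q^{++}$ is the unique quadrant realising both the minimum $L_j$ and the maximum number of robots; filling it again strictly decreases $L_{\mathcal Q^{++}}$ and drives the execution into the $C_{10}$ or $C_9$ case. The cases $C_{12}$ and $C_{13}$ are the balanced ones: here the designated robot on $l \cup l'$, respectively the guard (whose identity is invariant by the preceding GS lemma), is moved a single step off the axes into the prescribed quadrant, which makes the configuration unbalanced and so reduces to the $C_{11}$ case.

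The main obstacle will be the balanced subcases. Unlike the half-plane setting, where a single line partitions the robots into two classes, two axes give four quadrants together with the wedge boundaries $\mathcal B_i$, and the selection rule of Table \ref{tab:quadrant1} for $C_{13}$ distinguishes whether the two quadrants attaining the minimum $L_j$ and the maximum robot count are adjacent or non-adjacent with respect to the guard. I would therefore have to verify (i) that after the off-axis step the newly created maximum-robot quadrant is unique, so that the tie is genuinely broken and $\mathcal Q^{++}$ is well defined; (ii) that the guard---being the unique farthest robot from $l \cup \lbrace c \rbrace$ and hence invariant---fixes a single quadrant as its container throughout, so the adjacency-based choice of $\mathcal Q^{++}$ cannot oscillate; and (iii) that no asynchronous pending move of a candidate robot can transiently equalise two quadrants and reselect a different $\mathcal Q^{++}$. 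Establishing these three invariants, each following from the monotonicity of $L_{\mathcal Q^{++}}$ and the uniqueness of the guard, completes the reduction to the base case and hence the proof.
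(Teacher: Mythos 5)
Your overall strategy is exactly the paper's: a case analysis on $C_9$ through $C_{13}$, powered by the single monotonicity fact that robots on saturated parking nodes never move (so saturated nodes stay saturated), with each case eventually reducing to the base case $C_9$ once a node of $\mathcal Q^{++}$ saturates. Cases $C_9$, $\lnot C_9 \land C_{10}$, $C_{11}$ and $C_{12}$ match the paper's argument essentially verbatim.

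The one genuine misstep is your treatment of $C_{13}$. You write that in this case the guard ``is moved a single step off the axes into the prescribed quadrant, which makes the configuration unbalanced and so reduces to the $C_{11}$ case.'' That is not what the algorithm does: under $C_{13}$ there are no robots on $l \cup l'$ at all, and the guard is already off the axes (it is the unique farthest robot from $l \cup \lbrace c \rbrace$, placed outside $M_{\mathcal P}$ in the GS phase). No symmetry-breaking move occurs here. Instead, $\mathcal Q^{++}$ is defined \emph{statically} by reference to the guard's position (the quadrant non-adjacent to, or not containing, the guard's quadrant), and its invariance follows because the guard does not move until every parking node except one is saturated; targets are then selected in $\mathcal Q^{++}$ until it saturates, after which the earlier cases take over. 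You do in fact state the correct mechanism in your item (ii) --- that the guard's container quadrant is fixed so the adjacency-based choice cannot oscillate --- so the fix is only to replace the claimed guard movement in $C_{13}$ with that stationarity argument; the rest of the proof stands as in the paper.
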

\begin{proof}
 The following cases are to be considered.
 \setcounter{case}{0}
 \begin{case} \label{proof2-case1}
 \normalfont
 Condition $C_9$ holds. This implies that there exists a unique quadrant $\mathcal Q^{++}$, that contains the minimum number of unsaturated parking nodes. As a result, there exists at least one saturated parking node in $\mathcal Q^{++}$. The target parking node is selected in $\mathcal Q^{++}$ in the TPS phase, which has the highest order among all unsaturated parking nodes in $\mathcal Q^{++}$. Since no robot lying on a saturated parking node is allowed to move in the CRS phase, a saturated parking node can never become unsaturated. This implies that the minimum number of unsaturated parking nodes in the quadrant remains invariant. Hence, $\mathcal Q^{++}$ remains invariant.
 \end{case}
 \begin{case} \label{proof2-case2}
 \normalfont 
 Condition $ \lnot$ $C_9 \land C_{10}$ holds. This implies that the number of unsaturated parking nodes in each quadrant is the same. However, there exists a unique quadrant with a minimum $L_j$ value, for some $j \in \lbrace 1,2,3,4 \rbrace$, i.e., there exists a unique quadrant with the minimum total deficit measure. Assume that the quadrant $Q_j$ has the minimum $L_j$ value. The target parking node is selected in $\mathcal Q_j$, which is demarcated as $\mathcal Q^{++}$. Suppose at time $t>0$, a candidate robot is selected in the CRS phase and allowed to move towards the target parking node in $\mathcal Q^{++}$. While this robot reaches the target parking node, the value of $L_j$ decreases further. As a result, $\mathcal Q^{++}$ remains the unique quadrant with the minimum $L_j$ value. Eventually, there exists a time $t'> t$ at which at least one parking node in $\mathcal Q^{++}$ becomes saturated. The rest of the proof follows from Case \ref{proof2-case1}. 
 \end{case}
 \begin{case} \label{proof2-case3}
 \normalfont
Condition $ \lnot$ $C_9 \land \lnot  C_{10} \land C_{11}$ holds. This implies that there exist at least two quadrants with the minimum value of $L_j$. However, the configuration is unbalanced. That is, there exists a unique quadrant with the minimum value of $L_j$ and with the maximum number of robot positions. $\mathcal Q^{++}$ is selected as the quadrant with the minimum $L_j$ value and the maximum number of robot positions. According to the execution of the algorithm, a robot from the quadrant $\mathcal Q^{++}$ will move towards one of the wedge boundaries only if each target parking node in $\mathcal Q^{++}$ is saturated. As a result, there exists an instant of time, where a candidate robot reaches a parking node, resulting in a unique quadrant with the minimum $L_j$ value. The rest of the proof follows from Case \ref{proof2-case2}.
 \end{case}
 \begin{case}
 \normalfont
 Condition $ \lnot$ $C_9 \land \lnot  C_{10} \land \lnot C_{11} \land C_{12}$ holds. The configuration is balanced. The robot with the highest order among all the robots lying on  $l \cup l'$ first moves along $l \cup l'$. Next, when it is one node away from the quadrant with the minimum $L_j$ value and the maximum number of robot positions, it moves towards an adjacent node. This results in transforming the configuration into an unbalanced configuration. The rest of the proof follows from Case \ref{proof2-case3}.  
 \end{case}
 \begin{case}
 \normalfont
 Condition $ \lnot$ $C_9 \land \lnot  C_{10} \land \lnot C_{11} \land \lnot C_{12} \land C_{13}$ holds. This implies the configuration is balanced and there do not exist any robot positions on $l \cup l'$. $\mathcal Q^{++}$ is selected as the quadrant that is not adjacent to the quadrant containing the guard, or the quadrant without the guard when exactly two quadrants contain the maximum number of robots and the minimum number of unsaturated parking nodes. The guard moves only when each parking node except for one becomes saturated. As a result, the quadrant non-adjacent to the quadrant containing the guard remains invariant. The target parking node will be selected in $\mathcal Q^{++}$ unless each parking node in $\mathcal Q^{++}$ becomes saturated. Eventually, there exists a time when each parking node in $\mathcal Q^{++}$ becomes saturated. Hence, $\mathcal Q^{++}$ remains invariant.
 \end{case}
 
\end{proof}
\begin{lemma} \label{chap6:lemma4}
If the configuration is such that the parking nodes admit a unique line of symmetry $l$, then during the execution of the algorithm Parking(), the target parking nodes remain invariant.
\end{lemma}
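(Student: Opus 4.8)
The plan is to follow the same case decomposition used in Lemma \ref{chap6-lemma2} and to show that within whichever region the target is drawn from (the line $l$, the half-plane $\mathcal{H}^{+}$, or $\mathcal{H}^{-}$), the selection rule ``take the unsaturated parking node of highest order'' yields a target that cannot change until that node becomes saturated. I would rest the whole argument on three invariants. First, the ordering of the parking nodes induced by the leading corner(s) --- in particular the ordering $\mathcal{O}_2$ of the nodes on $l$ --- is frozen throughout the execution, since it is computed solely from the string $\alpha_i$ built out of $\mu(\cdot)$ and the parking nodes never move (this is exactly the content of the orderability observations of the preceding subsection). Second, a saturated parking node stays saturated, because the algorithm never moves a robot that already occupies a saturated node. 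Third, $\mathcal{H}^{+}$ is invariant by Lemma \ref{chap6-lemma2}.

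First I would dispose of the case $C_{14}$, where an unsaturated parking node lies on $l$. The target is then the highest-order unsaturated node on $l$ with respect to $\mathcal{O}_2$. Since $\mathcal{O}_2$ is fixed and no saturated node reverts, this highest-order unsaturated node on $l$ can change only when the current target is saturated; while the candidate robots travel toward it the ordering is untouched (it ignores robot positions), so $C_{14}$ persists and the same node keeps being selected until exactly $k_i$ robots arrive. After saturation the selection passes deterministically to the next unsaturated node of $l$, and once every node on $l$ is saturated we drop into $\lnot C_{14}$.

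In the asymmetric subcases ($\mathcal{I}_{21}$ with $\lnot C_{14}$), Lemma \ref{chap6-lemma2} fixes $\mathcal{H}^{+}$ and the ordering inside $\mathcal{H}^{+}$ is fixed, so the same monotonicity argument shows that the highest-order unsaturated node of $\mathcal{H}^{+}$ is stable until saturated; the process exhausts $\mathcal{H}^{+}$ (so $C_{16}$ becomes and remains true), then switches to $\mathcal{H}^{-}$ under $C_{16}\wedge\lnot C_{16}'$, where the identical argument applies. For the symmetric subcase ($\mathcal{I}_{22}$ with $\lnot C_{14}$) two symmetric parking nodes of highest order, one in each half-plane, are selected; here I would add the remark that the Candidate Robot rule chooses symmetric candidates in both half-planes, so the two targets are reflections of one another across $l$ and share the same order, whence they remain the highest-order unsaturated pair until both are saturated.

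The main obstacle I anticipate is the interplay of asynchrony with the symmetric/asymmetric transitions: under $\mathcal{ASYNC}$ the two symmetric candidate robots may advance unevenly, so the instantaneous configuration can oscillate between symmetric and asymmetric even though the parking-node symmetry with respect to $l$ is preserved. The crux is therefore to argue that target selection reads only the \emph{frozen} parking-node ordering and the saturation status of the nodes, never the transient robot positions, so these oscillations cannot alter the chosen target(s). I would make this precise by observing that every predicate governing the $\mathcal{I}_2$ rows of Table \ref{tab:target} --- namely $C_{14}$, $C_{16}$, $C_{16}'$, together with the invariance of $\mathcal{H}^{+}$ --- is a monotone function of the saturation vector; hence the selected target is a function of a monotonically evolving quantity and can only advance to lower-order nodes, never revert, which is exactly the claimed invariance.
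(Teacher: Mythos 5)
Your proposal follows essentially the same route as the paper's own proof: the same case split (unsaturated node on $l$; asymmetric with targets in $\mathcal H^{+}$ then $\mathcal H^{-}$; symmetric with a pair of targets in the two half-planes), resting on the same three pillars --- the ordering is determined by the fixed parking nodes and hence frozen, saturated nodes never revert, and $\mathcal H^{+}$ is invariant by Lemma \ref{chap6-lemma2}. Your added remark that the selection predicates are monotone in the saturation status is a slightly more explicit packaging of the paper's argument, but it is not a different proof.
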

\begin{proof}
 The following cases are to be considered.
 \setcounter{case}{0}
 \begin{case}
 \normalfont
 There exists at least one unsaturated parking node on $l$. Let $p$ be the target parking node selected on $l$ in the TPS phase. As a result, $p$ is the unsaturated parking node on $l$, which has the highest order with respect to $\mathcal O_2$. Since the ordering $\mathcal O_2$ depends only on the positions of the fixed parking nodes, the ordering remains invariant while the robot moves towards it. Hence, $p$ remains the target parking node unless it becomes saturated. 
 \end{case}
 \begin{case}
     \normalfont
     Each parking node on $l$ is saturated and the configuration is asymmetric. The following subcases are to be considered.
     \setcounter{subcase}{0}
     \begin{subcase}
         \normalfont
         There exists an unsaturated parking node in $\mathcal H^{+}$. Let $p'$ be the target parking node in $\mathcal H^{+}$, selected in the TPS phase. According to Lemma \ref{chap6-lemma2}, $\mathcal H^{+}$ remains invariant unless each parking node in $\mathcal H^{+}$ becomes saturated. Since the ordering of the parking nodes is defined with respect to the leading corners and $\mathcal H^{+}$ remains invariant, $p'$ remains the target parking node unless it becomes saturated. Hence, the target parking node remains invariant. 
         \end{subcase}
         \begin{subcase}
           \normalfont
           Each parking node in $\mathcal H^{+}$ is saturated. Let $p''$ be the target parking node in $\mathcal H^{-}$, selected in the TPS phase. It should be noted that a parking node in $\mathcal H^{-}$ is selected as a target parking node only when each parking node in $\mathcal H^{+}$ becomes saturated. Since the ordering of the parking nodes is defined with respect to the leading corners and $\mathcal H^{+}$ remains invariant, $p''$ remains the parking node which has the highest order in $\mathcal H^{-}$ unless it becomes saturated. Hence, the target parking node remains invariant. 
         \end{subcase}
 \end{case}
 \begin{case}
     \normalfont
     Each parking node on $l$ is saturated, and the configuration is symmetric. There are two target parking nodes selected in two different half-planes. It should be noted that the ordering of the parking nodes is defined with respect to the leading corner, and the leading corners remain invariant while the robots move toward the target. While the candidate robots move towards the target parking nodes, they remain the unsaturated parking nodes with the highest order in their respective half-planes. Hence, the target parking node remains invariant. 
 \end{case}
 \end{proof}
 \begin{lemma} \label{chap6:lemma5}
   If the configuration is such that the parking nodes admit rotational symmetry, then during the execution of the algorithm Parking(), the target parking nodes remain invariant.  
 \end{lemma}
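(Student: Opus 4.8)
The plan is to mirror the case analysis of the reflective case in Lemma~\ref{chap6:lemma4}, replacing the half-plane $\mathcal H^{+}$ by the quadrant $\mathcal Q^{++}$ and invoking the rotational invariance result Lemma~\ref{chap6-lemma3} in place of Lemma~\ref{chap6-lemma2}. I would first isolate two structural facts that drive every case. First, the leading corner(s), and hence the ordering of the parking nodes induced by them, depend only on the fixed positions and capacities of the parking nodes and so never change while the robots move. Second, by Lemma~\ref{chap4-lemma1} no robot sitting on a saturated parking node is ever permitted to move, so a saturated parking node stays saturated throughout the execution. This second fact is what forces the hierarchical target-selection rules of Table~\ref{tab:target} to progress monotonically and never backtrack.

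Next I would split into the three top-level cases matching the $\mathcal I_3$ rows of Table~\ref{tab:target}. First, suppose $C_{15}$ holds, i.e. there is an unsaturated parking node on $c$. This is a fixed grid point selected as the unique target; by the second fact it remains unsaturated-and-selected until exactly its capacity of robots arrives, hence it is invariant. Second, suppose the parking node on $c$ (if any) is saturated and $C(t)\in\mathcal I_{31}$ is asymmetric. Here I would apply Lemma~\ref{chap6-lemma3} to fix $\mathcal Q^{++}$; since $\mathcal Q^{-+}$, $\mathcal Q^{+-}$ and $\mathcal Q^{--}$ are defined purely by their adjacency to $\mathcal Q^{++}$ across $l$ and $l'$, they, together with the wedge boundaries $\mathcal B_i$, are invariant too. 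I would then descend the chain $C_{17}, C_{18}, \ldots, C_{23}$ as nested subcases: in each subcase the target is the highest-order unsaturated node inside a fixed region, so because the ordering is invariant and the region is invariant the target cannot change until it saturates, and the second fact ensures each condition $C_{17},\ldots$ once true remains true, so the selection advances through the regions without ever revisiting an emptied one.

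The third case is $C(t)\in\mathcal I_{32}$, where the configuration is symmetric about $c$ and the $c$-node (if present) is saturated. Now two target nodes (for a $180^{\circ}$ center) or four (for a $90^{\circ}$ center) are selected simultaneously as the highest-order unsaturated nodes lying in distinct quadrants or on distinct wedge boundaries. Since the ordering is invariant and the rotational automorphism maps the selected nodes onto one another, I would argue that as the candidate robots advance they remain the highest-order unsaturated nodes in their respective symmetric positions, so the whole symmetric set of targets is invariant until all its members saturate together.

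The step I expect to be the main obstacle is the asymmetric case $\mathcal I_{31}$: one must verify carefully that the \emph{derived} quadrants and wedge boundaries inherit invariance from $\mathcal Q^{++}$ and that the long monotone chain $C_{17}, \ldots, C_{23}$ never regresses, since this is precisely where a stray move of a non-saturated robot between regions could in principle re-order the selection. Controlling this rests entirely on the no-move property of saturated nodes together with the observation that a candidate robot crosses a wedge boundary only after the preceding region is fully saturated; I would make this ``one region at a time'' progression explicit to close the argument.
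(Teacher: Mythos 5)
Your proposal is correct and follows essentially the same route as the paper: the same three-way split (unsaturated node on $c$; asymmetric $\mathcal I_{31}$ descending the chain $C_{17},\ldots,C_{23}$ via the invariance of $\mathcal Q^{++}$ from Lemma~\ref{chap6-lemma3}; symmetric $\mathcal I_{32}$ with two or four simultaneous targets), resting on the invariance of the leading corners and the fact that saturated nodes stay saturated. The only quibbles are that the no-move property of robots on saturated parking nodes is a rule of the algorithm rather than a consequence of Lemma~\ref{chap4-lemma1}, and that the paper additionally remarks on the case where $p_{guard}$ is the target; neither affects the substance.
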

 \begin{proof}
     The following cases are to be considered.
     \setcounter{case}{0}
     \begin{case}
         \normalfont 
         There exists a parking node $p$ on $c$, which is unsaturated. Since $c$ is the center of rotational symmetry for the parking nodes, it remains invariant while the robot moves towards it. As a result, $p$ remains the target parking node unless it becomes saturated.
         \end{case}
     \begin{case}
         \normalfont
         There does not exist a parking node on $c$ or the parking node on $c$ is saturated. The configuration is asymmetric. First, assume that $p \neq p_{guard}$ is the target parking node. The following subcases are to be considered.
         \setcounter{subcase}{0}
         \begin{subcase}
             \normalfont
             The target parking node is selected in $\mathcal Q^{++}$. Let $p'$ be the target parking node selected in $\mathcal Q^{++}$. As a result, $p'$ is the unsaturated parking node in $\mathcal Q^{++}$, which has the highest order among all the unsaturated parking nodes in $\mathcal Q^{++}$. According to Lemma \ref{chap6-lemma3}, $\mathcal Q^{++}$ remains invariant unless each parking node in $\mathcal Q^{++}$ becomes saturated. It should be noted that the leading corners are defined with respect to the position of fixed parking nodes. The leading corners remain invariant while the robot moves toward the target parking nodes. Since the ordering of the parking nodes is defined with respect to the leading corners and $\mathcal Q^{++}$ remains invariant, $p'$ remains the target parking node unless it becomes saturated. Hence, the target parking node remains invariant.
         \end{subcase}
         \begin{subcase}
             \normalfont
             The target parking node is selected on the wedge boundaries corresponding to $\mathcal Q^{++}$. This implies that each parking node in $\mathcal Q^{++}$ becomes saturated. As $\mathcal Q^{++}$ remains invariant according to Lemma \ref{chap6-lemma3}, the wedge boundaries corresponding to $\mathcal Q^{++}$ remain invariant. Since the leading corners are defined with respect to the position of fixed parking nodes, the unsaturated parking node with the highest order on the wedge boundaries remains invariant. Hence, the target parking node remains invariant.
         \end{subcase}
         \begin{subcase}
             \normalfont
             The target parking node is selected on a quadrant adjacent to $\mathcal Q^{++}$. This implies that each parking node in the wedge boundaries corresponding to $\mathcal Q^{++}$ becomes saturated. It should be noted that since $\mathcal Q^{++}$ remains invariant, the quadrants adjacent to $\mathcal Q^{++}$ remain invariant. As the leading corners are defined with respect to the position of fixed parking nodes, the unsaturated parking node with the highest order on the quadrants $\mathcal Q^{+-}$ and $\mathcal Q^{-+}$ remains invariant. Hence, the target parking node remains invariant.
         \end{subcase}
         \begin{subcase}
             \normalfont
             The target parking node is selected on the wedge boundaries corresponding to $\mathcal Q^{+-}$ and $\mathcal Q^{-+}$. Since $\mathcal Q^{+-}$ and $\mathcal Q^{-+}$ remain invariant, the wedge boundaries corresponding to $\mathcal Q^{+-}$ and $\mathcal Q^{-+}$ remain invariant. As a result, the unsaturated parking node having the highest order on the wedges remains invariant. Hence, the target parking node remains invariant.  
         \end{subcase}
         \begin{subcase}
         \normalfont
             The target parking node is selected in $\mathcal Q^{--}$. This implies that each parking node on the other quadrants and on the wedge boundaries becomes saturated. As a result, at this stage of the algorithm, $\mathcal Q^{--}$ is the unique quadrant that has unsaturated parking nodes. The target parking node is selected in $\mathcal Q^{--}$, which is unsaturated and has the highest order in $\mathcal Q^{--}$. Since $\mathcal Q^{--}$ is the quadrant non-adjacent to $\mathcal Q^{++}$ and $\mathcal Q^{++}$ remains invariant, the target parking node also remains invariant unless it becomes saturated. Hence, the target parking node remains invariant.  
         \end{subcase}
         \noindent If $p_{guard}$ is the target parking node, it would be selected as the target parking node only when each parking node becomes saturated. As a result, it remains invariant. 
     \end{case}
     \begin{case}
       \normalfont
         There does not exist a parking node on $c$ or the parking node on $c$ is saturated. The configuration is symmetric with respect to rotational symmetry. In that case, the target parking nodes are selected as the unsaturated parking nodes with the highest order. It should be noted that these targets are selected either on the quadrants or on the wedge boundaries, and at any given time, there is a maximum of four parking nodes selected as targets. Since the leading corners are defined with respect to the position of fixed parking nodes, the parking nodes having the highest orders remain invariant. Hence, the target parking nodes remain invariant. 
         \end{case}
        \end{proof}
  \begin{lemma} \label{chap6:lemma6}
     During the CRS phase, the candidate robot remains invariant.
 \end{lemma}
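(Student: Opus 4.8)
The plan is to prove invariance by a case analysis that mirrors the candidate-selection rules of the CR phase, unified by a single observation: throughout this phase only candidate robots are in motion while every other robot stays fixed. Consequently each non-candidate robot retains exactly the same (free or blocked) path to the target, the target parking node itself is invariant by Lemmas \ref{chap6:lemma4} and \ref{chap6:lemma5}, and the relevant region $\mathcal{H}^{+}$ (resp. $\mathcal{Q}^{++}$) is invariant by Lemmas \ref{chap6-lemma2} and \ref{chap6-lemma3}. The whole argument then reduces to showing that, under these invariances, the selection rule keeps returning the same robot(s).

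First I would treat the asymmetric case. Let $r$ be the candidate, i.e.\ the non-saturated robot with the shortest free path to $p$, with ties broken by highest order. Since $r$ advances one edge toward $p$ along a shortest free path while all other robots are stationary, the distance $d(r,p)$ strictly decreases, whereas every other robot's shortest-free-path length to $p$ is unchanged. Hence after a single step $r$ is the \emph{unique} robot of strictly smallest free-path length, so the tie-break by order is not even invoked and $r$ remains the candidate regardless of any change in the key corner or in the induced ordering caused by its own motion. I would also verify that the suffix of $r$'s path stays free (immediate, since only $r$ moves) and that a free path continues to exist, which follows from the robot-free half-line emanating from $p$ guaranteed at the end of the CR-phase description.

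Next I would handle the symmetric cases. When $p$ lies on $l$ (resp.\ $c$) with no robot on $l$ (resp.\ $c$), the two (resp.\ two or four) closest robots lying in distinct half-planes (resp.\ quadrants) are the candidates; when $p$ lies in a half-plane (resp.\ quadrant), one candidate is chosen per half-plane (resp.\ per quadrant), the region being invariant by Lemmas \ref{chap6-lemma2} and \ref{chap6-lemma3}. Within each half-plane or quadrant the distance argument above applies verbatim, so each candidate stays the shortest-free-path robot of its region. The delicate point is the asynchronous interleaving: when one member of a symmetric pair advances but its mirror image has not yet moved, the configuration is transiently asymmetric, and I must argue that the selection rule re-identifies \emph{exactly} the same set of robots as candidates. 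The advanced robot is now strictly closer to its target, so it is still selected there, and its pending partner is still the unique closest robot to the mirror target. This transient-asymmetry step, reconciling the symmetric selection rule with the asymmetric one during a pending move, is where I expect the main difficulty, and it is precisely the situation captured by the \emph{nearly reflective} and \emph{nearly rotational} notions introduced earlier; I would invoke those to certify that the candidate set is preserved across the partially-completed symmetric move.

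Finally, the case where $p_{guard}$ is the target is immediate: the guard is the unique robot routed to $p_{guard}$ and no other robot competes for it, so it trivially remains the candidate. Collecting all cases yields that the candidate robot (or symmetric candidate set) remains invariant throughout the CR phase.
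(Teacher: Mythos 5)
Your proof follows essentially the same route as the paper's: the same case split (asymmetric; single line of symmetry with the target on $l$ or in the half-planes; rotational symmetry with the target at $c$ or in the quadrants/wedge boundaries; the guard's target), and the same core argument that the moving candidate strictly shortens its free path while every other robot's shortest-free-path length is unchanged, so after one step it is the unique closest robot and the order-based tie-break is needed only initially. Your explicit handling of the transient asymmetry under asynchronous interleaving is an addition the paper glosses over; just note that the \emph{nearly reflective}/\emph{nearly rotational} notions are defined only for the single robot sitting on $l$ or at $c$ in the symmetry-breaking phase, so they are not quite the right tool there --- but your direct argument (the advanced robot is strictly closer within its region, and its pending mirror remains the unique closest robot in the other region) already suffices without them.
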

 \begin{proof}
     The following cases are to be considered.
     \setcounter{case}{0}
     \begin{case}
         \normalfont
         $C(t)$ is asymmetric. The candidate robot is selected as the robot that has a shortest free path toward the target parking node. If there exists a unique such robot, while the candidate robot moves toward its target, it remains the unique robot that has a shortest free path towards its target. Hence, it remains invariant. Next, consider the case when there are multiple such robots. Since $C(t)$ is asymmetric, the robots are orderable. The candidate is selected as the robot which has the highest order among such robots. While the candidate moves an adjacent node toward the target, it becomes the unique robot that has a shortest free path toward the target. The rest of the proof follows from the previous case when there exists a unique robot that has a shortest free path toward the target parking node. Hence, the candidate robot remains invariant.
     \end{case}
     \begin{case}
         \normalfont
         $C(t)$ is symmetric with respect to a single line of symmetry $l$ and the target parking node is on $l$. The candidate robots are selected as the two symmetric robots that have a shortest free path toward $p$. While they move toward $p$, they remain the robots with a shortest free path toward $p$. If there are multiple such robots, the candidate robots are the two robots with the highest order in their respective half-planes and have a shortest free path toward $p$. While they move towards $p$, they remain the unique robots in their respective half-planes that have a shortest free path towards $p$. Hence, the candidate robot remains invariant. 
     \end{case}
     \begin{case}
         \normalfont
         $C(t)$ is symmetric with respect to a single line of symmetry $l$ and the target parking nodes are on the half-planes. It should be noted that in this case, a candidate robot is selected in each half-plane, for each target parking node. The candidates are selected as the robots which have a shortest free path toward the target. While the candidates move toward their respective targets, it remains the unique robot that has a shortest free path toward their target. The proof proceeds similarly to the previous case when there are multiple such robots for each target parking node. Hence, the candidate robot remains invariant. 
     \end{case}
     \begin{case}
         \normalfont
         $C(t)$ is symmetric with respect to rotational symmetry and the target parking node is on $c$. Depending, on whether the angle of rotation is $90 ^{\circ}$ or $180 ^{\circ}$, two or four candidate robots are selected. The robots with the shortest free path toward the target are selected as candidate robots. While they move toward the target, they remain the robots that have a shortest free path toward the target. Hence, the candidate robot remains invariant. 
     \end{case}
     \begin{case}
         \normalfont
         $C(t)$ is symmetric with respect to rotational symmetry and the target parking nodes are on the quadrants or on the wedge boundaries. It should be noted that, in this case, a candidate robot is selected in each quadrant or on the wedge boundaries, for each target parking node. The candidates are selected as the robots which have a shortest free path toward the target. While the candidates move toward their respective targets, it remains the unique robot that has a shortest free path toward their target. Hence, the candidate robot remains invariant. 
     \end{case}
    \end{proof}
   \noindent  We next consider the following theorem.
\begin{theorem}
    Algorithm Parking() solves the Parking Problem in Infinite grids for all initial configurations not belonging to the set $\mathcal U$.
\end{theorem}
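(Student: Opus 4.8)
The plan is to prove correctness by establishing two things: that the algorithm makes progress (at least one parking node eventually becomes saturated and stays saturated), and that throughout the execution the invariants already proved in Lemmas~\ref{chap6-lemma2}--\ref{chap6:lemma6} are never violated, so that no robot multiplicity forms on a non-parking node (which by Lemma~\ref{chap4-lemma1} would doom the process). The overall argument is therefore a phase-by-phase induction on the number of saturated parking nodes, showing that each phase terminates and correctly hands off to the next.

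First I would partition the proof according to the configuration class, since the algorithm itself branches on whether the parking nodes are asymmetric ($\mathcal I_1$), reflectively symmetric ($\mathcal I_2$), or rotationally symmetric ($\mathcal I_3$). For the asymmetric-parking case ($\mathcal I_1$) and the asymmetric-configuration subcases ($\mathcal I_{21}, \mathcal I_{31}$), I would first invoke the \emph{GS} phase lemma to guarantee a guard is placed so that $C(t)$ stays asymmetric, then use Lemmas~\ref{chap6-lemma2} and~\ref{chap6-lemma3} to fix $\mathcal H^{+}$ or $\mathcal Q^{++}$ as invariant, Lemmas~\ref{chap6:lemma4} and~\ref{chap6:lemma5} to fix the target parking node as invariant, and Lemma~\ref{chap6:lemma6} to fix the candidate robot as invariant. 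With all four objects invariant and a free path guaranteed to exist (as argued at the end of the \emph{CR} phase), the candidate robot reaches its target in finitely many moves without collision, saturating one more parking node. Since saturated nodes never become unsaturated, the count of unsaturated nodes strictly decreases, giving termination. The final guard move in the \emph{GM} phase saturates the last node.

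For the symmetric-configuration cases ($\mathcal I_{221}, \mathcal I_{321}$ and the nearly-reflective/nearly-rotational configurations) I would first apply the \emph{SB} phase: the unique robot $r$ on $l \cup \{c\}$ with maximum order moves to an adjacent free node, breaking symmetry and reducing to an already-handled asymmetric case. Here I must check that a free node is available (handled by the $\beta$-string definitions and the nearly-equal notion) and that pending asynchronous moves cannot recreate a forbidden multiplicity. For the remaining symmetric cases ($\mathcal I_{222}, \mathcal I_{223}$ with even capacity on $l$, and $\mathcal I_{322}, \mathcal I_{323}$ with admissible capacity on $c$), symmetry cannot and need not be broken; instead two or four candidate robots move \emph{simultaneously in symmetric groups} toward the two or four symmetric target parking nodes, so the configuration stays symmetric and each symmetric batch of parking nodes saturates together. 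The exclusion of the $\mathcal U$ configurations (odd capacity on a reflection axis, or non-admissible capacity at a rotation center) is exactly what guarantees that each on-axis target can be split into symmetric pairs without forcing a multiplicity at a non-parking node.

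The main obstacle I expect is the asynchrony together with the simultaneous symmetric moves: when two symmetric candidate robots advance toward symmetric targets, an adversarial scheduler can interleave their Look-Compute-Move cycles so that one has moved while the other has a pending move, transiently breaking the clean symmetry the analysis relies on. I would handle this by arguing that the destination of each candidate is determined solely by invariant quantities (the fixed target node and the free half-line guaranteeing a free path), so a robot observing a half-completed symmetric move still computes the same target and merely continues; no robot is ever induced to create a multiplicity off a parking node, and in particular a robot only enters a node already holding other robots when that node is its target parking node and the capacity bound is respected. Assembling these per-class termination-and-safety arguments, together with the impossibility results of Section~\ref{s3} that characterise $\mathcal U$, establishes that Algorithm Parking() correctly saturates every parking node for all initial configurations outside $\mathcal U$, which is the claim of the theorem.
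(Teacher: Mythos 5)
Your proposal follows essentially the same route as the paper's proof: a case analysis over the configuration classes $\mathcal I_1$, $\mathcal I_2$, $\mathcal I_3$ (and their subclasses), invoking the guard-placement argument to preserve asymmetry, Lemmas~\ref{chap6-lemma2}--\ref{chap6:lemma5} for the invariance of $\mathcal H^{+}$/$\mathcal Q^{++}$ and of the target parking node, and Lemma~\ref{chap6:lemma6} for the invariance of the candidate robot, with termination following from the fact that saturated nodes never become unsaturated. Your treatment is somewhat more explicit than the paper's about the induction on the number of saturated nodes, the asynchrony hazard, and the role of excluding $\mathcal U$, but the underlying argument is the same.
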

\begin{proof}
\normalfont
The algorithm \textit{Parking()} proceeds according to the different classes of configurations. The following cases are to be considered.
\setcounter{case}{0}
\begin{case}
    \normalfont
    $ C(0) \in \mathcal I_1$: Since the parking nodes are asymmetric, the parking nodes are uniquely orderable. The ordering remains invariant while the robots move toward the target parking node. The target parking node can be selected in a unique manner. According to Lemma \ref{chap6:lemma6}, the candidate robot remains invariant while it moves toward the target parking node. While the target parking node contains a number of robots equal to the capacity of the parking node, it becomes saturated. The process continues unless each parking node becomes saturated. Hence, the algorithm Parking() solves the parking problem for all configurations belonging to $\mathcal I_1$.
\end{case}
\begin{case}
    \normalfont 
    $C(0) \in \mathcal I_2$: First consider the case when $C(0) \in \mathcal I_{21}$. In the GS phase, a guard is selected and placed in such a way that the configuration remains asymmetric during the execution of the algorithm. As a result, the configuration $C(t) \notin \mathcal U$, for any $t>0$. The target parking node is selected in the TPS phase, which remains invariant according to Lemma \ref{chap6:lemma4}. The candidate robot moves toward the target parking node. According to Lemma \ref{chap6:lemma6}, the candidate robot remains invariant while it moves toward the target parking node. When all the parking nodes become saturated, the algorithm terminates. Next, consider the case when $\mathcal C(0) \in \mathcal I_{221} $. The symmetry of the configuration can be broken by allowing the robot on $l$ with the maximum configuration view is allowed to move towards an adjacent node away from $l$. This transforms the configuration into an asymmetric configuration and the procedure proceeds similarly as before. If $C(0) \in \mathcal I_{222}$, the target parking node is selected in the TPS phase, which remains invariant according to Lemma \ref{chap6:lemma4}. The candidate robot moves toward the target parking node. According to Lemma \ref{chap6:lemma6}, the candidate robot remains invariant while it moves toward the target parking node. The procedure terminates when all the parking nodes become saturated. In case $C(0) \in \mathcal I_{223}$, first the parking nodes on $l$ become saturated and the rest of the procedure proceeds similarly as in the case when $C(0) \in \mathcal I_{222}$.
    
\end{case}
\begin{case}
    \normalfont
    $\mathcal C(0) \in \mathcal I_3$. First, consider the case when $C(0) \in \mathcal I_{31}$. The placement of the guard in the Guard Selection and Placement phase ensures that the configuration remains asymmetric during the execution of the algorithm Parking(). As a result, the configuration $C(t) \notin \mathcal U$, for any $t>0$. The target parking node is selected in the TPS phase, which remains invariant according to Lemma \ref{chap6:lemma5}. The candidate robot moves toward the target parking node. According to Lemma \ref{chap6:lemma6}, the candidate robot remains invariant while it moves toward the target parking node. When all the parking nodes become saturated, the algorithm terminates. If $C(0) \in \mathcal I_{321}$, the symmetry of the configuration can be broken by allowing the robot on $c$ to move towards an adjacent node. This transforms the configuration into an asymmetric configuration. The procedure Parking() proceeds similarly as before. If $C(0) \in \mathcal I_{322}$, the target parking node is selected in the TPS phase, which remains invariant according to Lemma \ref{chap6:lemma5}. The candidate robot moves toward the target parking node. According to Lemma \ref{chap6:lemma6}, the candidate robot remains invariant while it moves toward the target parking node. The procedure terminates when all the parking nodes become saturated. In case $C(0) \in \mathcal I_{323}$, first the parking nodes on $c$ become saturated and the rest of the procedure proceeds similarly as in the case when $C(0) \in \mathcal I_{322}$.
\end{case}
\end{proof}
\section{Conclusion} \label{chap6:sec7}
\noindent This chapter proposed a deterministic distributed algorithm for solving the parking problem in infinite grids. We have characterized all the initial configurations and the values of $k_i$ for which the problem is unsolvable, even if the robots are endowed with strong multiplicity detection capability. A deterministic algorithm has been proposed under the assumption that the robots are endowed with global-strong multiplicity detection capability. As a future work, it would be interesting to investigate the problem in case the number of robots is not equal to the sum of the capacities of the parking nodes. In case the number of robots in the initial configuration is less than the sum of the capacities of the parking nodes, one interesting study could be to investigate the problem with the objective of maximizing the number of saturated parking nodes. Another direction of future work would be to consider the problem with the objective of minimizing the number of moves in order to accomplish the parking process.


 \bibliographystyle{elsarticle-num} 
 \bibliography{ref.bib}


\end{document}